\newtheorem{example}{Example}
\newtheorem{condition}{Condition}
\newcommand{\setappendix}{Appendix~\thesection:~~}
\newcommand{\setsection}{\thesection~~}
\titleformat{\section}{\bfseries\LARGE}{%
	\ifnum\pdfstrcmp{\@currenvir}{appendices}=0
	\setappendix
	\else
	\setsection
\fi}{0em}{}
\def \({\left(}
\def \){\right)}
\def \[{\left[}
\def \]{\right]}
\newcommand{\bsig}{{\boldsymbol{\sigma}}}
\newcommand{\bu}{{\textbf {u}}}
\newcommand{\bx}{{\textbf {x}}}
\newcommand{\btau}{{\boldsymbol{\tau}}}
\newcommand{\<}{\langle}
\renewcommand{\>}{\rangle}
\newcommand{\be}{\begin{equation}}
\newcommand{\ee}{\end{equation}}
\newcommand{\bea}{\begin{align}}
\newcommand{\eea}{\end{align}}
\newcommand{\tJ}{J}
\newcommand{\sH}{\mathcal{H}}
\newtheorem{theorem}{Theorem}
\newtheorem{lemma}[theorem]{\textbf{Lemma}}
\newtheorem{thm}[theorem]{\textbf{Theorem}}
\newtheorem{proposition}[theorem]{\textbf{Proposition}}
\DeclareMathAlphabet{\varmathbb}{U}{bbold}{m}{n}
\newcommand*{\QED }{\hfill\ensuremath{\square}}
\begin{document}
\title{Concentration of multi-overlaps for random ferromagnetic spin models}
\author{Jean Barbier$^{\star*}$, Chun Lam Chan$^{\dagger}$ and Nicolas Macris$^{\dagger}$}
\date{}
\maketitle
{\let\thefootnote\relax\footnote{
\hspace{-18.5pt}
$\star$ Quantitative Life Sciences, The Abdus Salam International Center for Theoretical Physics, Trieste, Italy.\\
$*$ Statistical Physics Laboratory, \'Ecole Normale Sup\'erieure, Paris, France.\\
$\dagger$ Communication Theory Laboratory, \'Ecole Polytechnique F\'ed\'erale de Lausanne, Switzerland.
}}
\begin{abstract}
We consider ferromagnetic spin models on dilute random graphs and prove that, with suitable one-body infinitesimal perturbations added to the Hamiltonian, the multi-overlaps 
concentrate for all temperatures, both with respect to the thermal Gibbs average and the quenched randomness. Results of this nature have been known only for the lowest order overlaps, 
at high temperature or on the Nishimori line.
Here we treat all multi-overlaps by a non-trivial application of Griffiths-Kelly-Sherman correlation inequalities. Our results apply in particular to 
the pure and mixed $p$-spin ferromagnets on random dilute Erdoes-R\'enyi hypergraphs. On physical grounds one expects that multi-overlap concentration directly implies the correctness of the cavity 
(or replica symmetric) formula for the pressure. The proof of this formula for the general $p$-spin ferromagnet on a random dilute hypergraph remains an open problem. 
\end{abstract}
{
	\singlespacing
	\hypersetup{linkcolor=black}
	\tableofcontents
}
\section{Introduction}

Disordered spin models have a long history and despite much progress their statistical mechanics is far from understood. Among them, mean-field models have a 
special status because they can often be solved exactly by means of replica or cavity methods \cite{mezard1990spin}. These solutions are generally 
controlled by a set of ``multi-overlap parameters'' (simply called ``overlaps'' in the following). If $\sigma_i=\pm 1$ for $i=1, \dots, n$ denote a set of $n$ binary spins of the model, the overlap parameters are generally defined as 
$Q_k \equiv \frac{1}{n}\sum_{i=1}^n \sigma_i^{(1)}\sigma_i^{(2)}\dots \sigma_i^{(k)}$
where $k\geq 1$ is an integer and $(\sigma_i^{(\alpha)})_{i=1,\dots, n}^{\alpha=1,\dots, k}$ are distributed according to the replicated Gibbs distribution, 
in other words the product of $k$ copies of the Gibbs distribution. It is believed that the distribution of the overlaps controls 
the nature of the solution and in particular whether it displays a
``replica symmetric'' behaviour or a more complicated sequence of ``replica symmetry broken'' phases. In the replica 
symmetric phase the overlaps are concentrated on single numerical values, while in the replica symmetric broken phase 
they do not self-average and their distribution displays non-trivial ultrametric properties \cite{mezard1990spin}. 

Much progress on this picture has been done in the relatively recent literature,
starting from pioneering results of Pastur and Scherbina \cite{PasturScherbina:1991,Shcherbina:1991}, 
Ghirlanda and Guerra \cite{ghirlanda1998general,Guerra:2006:intro,Guerra:2006}, and Aizenman and Contucci \cite{AizenmanContucci:1998}.
We refer to \cite{Chatterjee:2014:superconcentration,talagrand2010meanfield1,talagrand2010meanfield2,Panchenko2013} and references therein for more recent progress as well as comprehensive reviews. 

Most results in the literature concern models defined on the complete graph with quenched random coupling constants, 
e.g. the Sherrington-Kirkpatrick model and generalizations to $p$-spin interactions with $p\geq 3$, as well as their spherical incarnations. 
For such models, on the {\it complete} graph, the whole statistical mechanical solution is generally controlled by the lowest order 
overlaps, namely $Q_1$ and $Q_2$, and known results mainly concern $Q_1$, $Q_2$ (note that $Q_1$ is related to the magnetization and $Q_2$ to the Edwards-Anderson order parameter). 

In contrast, on {\it dilute} graphs (typically locally tree-like sparse Erdoes-R\'enyi random hpergraphs)
the {\it whole} sequence of overlaps $\{Q_k, k\geq 1\}$ controls the replica solution, and the corresponding study 
of their distribution for all $k\geq 1$ is largely open. 
Note that \cite{talagrand2010meanfield1,talagrand2010meanfield2} discusses the high temperature case for the lowest order overlaps of some dilute graph models. Another interesting 
recent piece of work is the one by Chatterjee on the random field Ising model over generic graphs, and in any dimensions \cite{Chatterjee2015}. 
There, again, only the lowest order overlap is controlled but for (almost) all temperatures and external field strenghts. 
The author claims that his result implies the replica symmetry of the model. We believe that in order for 
such a claim to hold in the most generic sense one should in addition show that such concentration results imply the validity of the 
replica symmetric fomula for the free energy. This is a non-trivial task even for models on dense graphs (see the discussion at the end of the introduction) 
and for dilute models it is probably necessary to control overlaps of all orders.

In \cite{barra2006overlap,barra2007stability,FranzLeoneToninelli:2003,LucaSilvio:2009}
non-trivial constraints analogous to the Aizenman-Contucci and Ghirlanda-Guerra ones are derived for all overlaps of dilute models but one cannot deduce the concentration properties from 
these (nevertheless some of the techniques used in the present contribution are inspired from these works).

In this contribution our main interest is the study of fluctuations and self-averaging properties of {\it all} overlaps $\{Q_k, k\geq 1\}$ for the Ising multi-body 
ferromagnets on sparse random hypergraphs (typically of Erdoes-R\'enyi type).
We distinguish two types of 
fluctuations, namely the thermal ones and those with respect to the disorder, informally measured by the two quantities
$\mathbb{E} \langle (Q_k - \langle Q_k \rangle)^2\rangle $ and 
$\mathbb{E}[ (\langle Q_k\rangle  - \mathbb{E}\langle Q_k \rangle)^2 ]$.
Adding these two fluctuations one finds the total fluctuations $\mathbb{E} \langle (Q_k - \mathbb{E}\langle Q_k \rangle)^2\rangle $. 
Here 
$\langle -\rangle$ denotes the Gibbs expectation with respect to the ``replicated'' Gibbs measure and $\mathbb{E}$ the 
expectation with respect to the quenched disorder (see Section \ref{ferromodelmulti-overlapconcentration} for the precise definitions). 
Our main result states that both types of fluctuations vanish in the thermodynamic limit for 
all temperatures. For this result to hold at all temperatures, we must add suitable ``infinitesimal'' one-body perturbations to the Hamiltonian\footnote[1]{By infinitesimal perturbations we mean perturbations that 
do not change the thermodynamic limit of the pressure when we take the limit of zero perturbation after the thermodynamic limit.}. 
Indeed, concentration may hold only within a ``pure state'' and it is well known that one must add suitable perturbations in order to 
select pure states (that may coexist at low temperatures). 

We would like to stress that, for disordered systems, the nature of the perturbation that one should add is not always clear. 
For example all $p$-spin interactions are sometimes added to the two-body Sherrington-Kirkpatrick Hamiltonian, or even in 
the finite-dimensional short-range Edward-Anderson model, and it is perhaps not so clear what the physical interpretation of such perturbations is \cite{Panchenko2013}. 
Here we limit ourselves to the simple one-body perturbations that can physically be interpreted as infinitesimal external magnetic fields.

To the best of our knowledge this is the first time a concentration result is established for {\it all} overlaps $\{Q_k, k\geq 1\}$ in a dilute disordered spin model 
for all temperatures. Examples of models that are covered by our results are the pure and mixed $p$-spin dilute ferromagnets on 
random Erdoes-R\'enyi hypergraphs\footnote{With minor adjustments in the formulation of the models we can also cover ferromagnets 
on dense graphs.}.
Here the coupling constants are ferromagnetic and this allows the use of the Griffith-Kelly-Sherman (GKS) inequality which plays an important role in our analysis. 
There are at least two interesting directions where one might hope to extend this type of result. First, the $p$-spin dilute ferromagnet in a random external magnetic field 
(taking both negative and positive signs), as considered in \cite{Chatterjee2015} for $p=2$. In fact many of our intermediate results still hold with almost identical proofs 
by using the Fortuin-Kasteleyn-Ginibre (FKG) inequality instead of GKS. A combination of FKG and the Ghirlanda-Guerra identities actually forms the basis of the analysis in \cite{Chatterjee2015} for the lowest order overlap. Second, dilute $p$-spin models with random coupling constants (taking both signs) on their Nishimori line. These models have important applications, for example in communications \cite{tanaka2002statistical,korada2010,Montanari2005,MacrisKudekar2009} or in community detection (see \cite{CIT-067} and references therein).

Let us briefly discuss possible applications of our analysis, to which we hope to come back in the future. It is known folklore that the self-averaging of the overlaps should 
imply that the solution of the model is replica symmetric, and in particular that the replica symmetric expression of the thermodynamic pressure is valid. However, to our knowledge, 
this logical implication has never been mathematically settled in a clear way. In \cite{PasturScherbina:1991} the authors  
show (for models on complete graphs) that if the pressure is not given by a replica symmetric expression then the overlaps cannot concentrate. Also, the Guerra-Toninelli interpolation method \cite{guerra2005introduction}
makes it clear that if the pressure is not replica symmetric then the overlaps of an ``interpolated model'' cannot concentrate. Recently two of us  have developed 
an ``adaptive interpolation method'' \cite{barbier_stoInt} (a powerful extension of the Guerra-Toninelli method)
which deduces the replica symmetric expression of the pressure from the total concentration of the overlaps 
of an interpolated model with suitable perturbation added. Here it is important to stress that (currently at least) we need to control both type of fluctuation introduced above. This adaptive interpolation method has been applied with success to 
many models on complete graphs \cite{2017arXiv170910368B,Barbier:PMLR2018,toappear,Aubin:NIPS2018,Bengio:NIPS2018} where concentration of the lowest order overlap suffices (and the solution is replica symmetric for all temperatures). It is still largely an open problem to extend the method to situations on dilute 
graphs where all overlaps must be analyzed. The present paper is a first step towards developing this program for the $p$-spin Ising ferromagnet on dilute hypergraphs. 

We point out that the replica symmetric (or cavity) formula has been proved for the case $p=2$ by different methods \cite{Dembo-Montanari:2010} using the GKS and Griffiths-Hurst-Sherman (GHS) inequalities as well as algorithmic message passing ideas. Other related models where the replica symmetric formula for the pressure is established are $p$-spin models on dilute graphs with random couplings on their Nishimori line \cite{CojaOghlanKPZ:2018}. This work uses the Aizenman-Sims-Starr approach to the cavity method \cite{aizenman2003extended}, in conjunction with the 
Guerra-Toninelli interpolation method first developed for dilute spin models in \cite{FranzLeone} and applied to communication models in \cite{Montanari2005,korada2010,MacrisKudekar2009,PanchenkoTalagrand2004} or used for systematically proving the existence of the thermodynamic limit of various quantities in spin models and constraint satisfaction problems in \cite{bayati2010combinatorial,salez2016interpolation}.
When the Nishimori symmetry is present, concentration for the thermal fluctuations has already been established (and used in these works) in various guises 
\cite{MacrisKudekar2009,Montanari:ETT2008,CojaOghlanKPZ:2018} by using special identities implied by the Nishimori symmetry. However a proof of concentration for fluctuations with respect to the quenched disorder and full concentration of all overlaps is still elusive. 

In Section \ref{ferromodelmulti-overlapconcentration} we formulate the models and state our main theorems. The proofs are found in Section \ref{sec:conc}. The appendices contain technical intermediate results.

\section{Ferromagnetic spin models and overlap concentration}\label{ferromodelmulti-overlapconcentration}
Consider a collection of $n$ binary spins $\sigma_i\in \{-1, 1\}$, $i=1,\dots, n$. For any subset $X\subset\{1,\dots,n\}$  we denote
$\sigma_X = \prod_{i\in X} \sigma_i$.
A generic ferromagnetic spin system has Hamiltonian 
\begin{align}\label{mostgeneralhamiltonian}
 \sH_{0}(\bsig) \equiv - \sum_{X\subset \{1,\dots,n\}} \tJ_X\sigma_X
\end{align}
where $J_X\geq 0$ and the sum runs over all possible $2^n$ subsets of $\{1,\dots,n\}$. The only subsets of spins that truly participate in the interactions are of course those for which $J_X > 0$. 
The random models that we consider have independently distributed coupling constants $J_X$, $X\subset \{1,\dots,n\}$, with distribution supported on $\mathbb{R}_{\ge 0}$. As already said in the introduction our main interest is in dilute systems, a typical example of which is given below.

The thermodynamic potential of interest is the pressure
\begin{align*}
P_n \equiv \frac{1}{n}\ln {\cal Z} = \frac{1}{n}\ln \sum_{\bsig\in\{\pm1\}^n}\exp(-\sH_{0}(\bsig))
\end{align*} 
where ${\cal Z}$ is the partition function of the model. Without loss of generality we consider the inverse temperature $\beta =1$ as this amounts to a simple global rescaling of the interactions.
The average pressure is defined as $p_n \equiv \mathbb{E}\, P_n$ where $\mathbb{E}$ is the expectation over all the coupling constants.
For models of physical interest one expects that $P_n$ concentrates over $p_n$. Our theorems on overlap concentration stated below are 
formulated in a generic setting and hold as long as the concentration of the pressure holds:
\begin{align}\label{concpressuregeneral}
 \mathbb{E}[( P_n - p_n)^2] \leq \frac{C_P}{n}
\end{align}
for $C_P>0$ a constant independent of $n$. 
In Appendix \ref{sec:free_ent_conc} we verify by standard arguments that the 
following simple condition implies \eqref{concpressuregeneral} (we do not need $J_X \geq 0$ for this implication, see Proposition \ref{thm:free_ent_conc}):
\begin{condition} \label{conditionC}
We assume that $J_X$, $X\subset \{1, \dots, n\}$ are independent random variables with finite second moment and such that
$\sum_{X\subset \{1, \dots, n\}} {\rm Var}(J_X) \leq C_P n$ for a numerical constant $C_P>0$ independent of $n$.
\end{condition}
Models of physical interest also have a well-defined thermodynamic limit for $p_n$. This requires a little bit 
more structure on the distribution of the couplings $J_X$ and will not be used (see \cite{bayati2010combinatorial,salez2016interpolation} for proofs of the existence of such limits).
For completeness we give a simple hypothesis and standard argument in Appendix \ref{sec:free_ent_conc} that 
guarantees the existence of the thermodynamic limit in the case of ferromagnetic models (see Proposition \ref{propD2}). 
 
Let us give a canonical example of ferromagnetic spin system on a dilute random graph where Condition~\ref{conditionC} is satisfied as well as the existence of the thermodynamic limit of the pressure. 
Note that our results also cover dense graph systems as long as $J_X$ are suitably rescaled with $n$ so that Condition~\ref{conditionC} is met. 

\begin{example}[$p$-spin models on the Erdoes-R\'enyi hypergraph]\label{ex1}
A {\it dilute} ferromagnetic $p$-spin model with coupling strength $J >0$ and magnetic field $H>0$ can be constructed as follows. For all subsets 
$X\subset\{1,\dots, n\}$ with cardinalities different from $1$ and $p$ set $J_X =0$.
For all $X$ such that $\vert X\vert =1$ set $J_X = H$. In other words the Hamiltonian contains the one-body term $- H\sum_{i=1}^n \sigma_i$.  For all subsets with 
$\vert X\vert =p$ take for $J_X$ Bernoulli random variables with $\mathbb{P}(J_X=J) = \gamma n/\binom{n}{p}$ and $\mathbb{P}(J_X=0) = 1 - \gamma n/\binom{n}{p}$ where $\gamma >0$. 
The Hamiltonian thus contains on average of the order of $\gamma n$ interaction terms of the form $-J \sigma_{{a_1}} \sigma_{{a_2}} \ldots \sigma_{{a_p}}$ where 
 $a_i$, $i=1, \dots, p$ are chosen uniformly at random in $\{1, \dots, n\}$ without repetition.

This model can be generalized to mixed $p$-spin models as follows: Fix $H>0$, $J_2, \dots, J_{p^*} >0$, $\gamma_2, \dots, \gamma_{p^*} >0$. Let $k=2,\dots, p^*$. We then draw Bernoulli random variables for the 
couplings of subsets with cardinality $\vert X\vert = k$ such that $\mathbb{P}(J_X = J_k) = \gamma_k n/\binom{n}{k}$ and $\mathbb{P}(J_X=0) = 1 - \gamma_k n/\binom{n}{k}$. And 
again of course $J_X = H$ for $X$ such that $\vert X\vert =1$ and $J_X =0$ if instead $\vert X\vert \neq 1, 2, \dots, p^*$. These models are generalizations of the Ising two-body ferromagnet on a standard 
Erdoes-R\'enyi random graph. 
In  Appendix \ref{sec:free_ent_conc} we verify that Condition~\ref{conditionC} is satisfied so that $P_n$ concentrates on $p_n$, and also that the thermodynamic limit of $p_n$ exists.
\end{example}



Any observable is a linear combination over subsets $T\subset\{1,\dots, n\}$ of $\sigma_T \equiv \prod_{i\in T}\sigma_i$ and their Gibbs expectation 
is denoted by 
\begin{align*}
 \langle \sigma_T\rangle \equiv \frac{1}{\cal Z}\sum_{\bsig\in\{\pm1\}^n}\sigma_T \exp(-\sH_{0}(\bsig))\,.
\end{align*}
The crucial property of ferromagnetic models that will be instrumental in our analysis are the Griffiths-Kelly-Sherman (GKS) correlation inequalities. In their most general form these state that
\begin{align}
\< \sigma_T \>\geq 0 \qquad \text{and} \qquad \< \sigma_T \sigma_S \> - \< \sigma_T \> \< \sigma_S \> \geq 0\, \label{eq:GKS}
\end{align}
for any subsets $T, S \subset\{1,\dots, n\}$ and any set of coupling constants $J_X\geq 0$, $X\subset \{1,\dots, n\}$.

The {\it multi-overlaps} (simply called overlaps) are defined for any integer $k\geq 1$ as 
\begin{align}
 Q_k \equiv \frac{1}{n}\sum_{i=1}^n \sigma_i^{(1)}\cdots \sigma^{(k)}
 \label{def:overlap}
\end{align}
where $\{\bsig^{(\alpha)}, \alpha=1, \dots, k\}$ is a set of $k$ {\it replicas} of the spin configurations drawn according to the $k$-fold tensor product of the Gibbs measure. We emphasize that 
in this work
the replicas are always uncoupled and i.i.d.. The Gibbs average w.r.t. the tensor product Gibbs measure is still indicated as $\langle - \rangle$. 
Note that 
\begin{align*}
 \langle Q_k\rangle = \frac{1}{n}\sum_{i=1}^n \langle \sigma_i^{(1)}\cdots \sigma^{(k)}\rangle = \frac{1}{n}\sum_{i=1}^n \langle \sigma_i\rangle^k, \qquad k\geq 1\, .
\end{align*}

It is well known that concentration results for overlaps generally require the addition of small perturbation terms 
whose role is to select ``pure states.'' With suitable such perturbations, and under a suitable concentration hypothesis for the pressure (of the perturbed model) 
we show that for large $n$: {\it i) For any instance of the random model (i.e. of the quenched disorder) $Q_k$ concentrates over $\langle Q_k\rangle$; 
and ii) 
$\langle Q_k\rangle$ concentrates over $\mathbb{E}\langle Q_k\rangle$.} 
These two concentration properties imply that overall $Q_k$ concentrates on $\mathbb{E}\langle Q_k\rangle$. 
One then expects that the pressure is given by the cavity (or replica symmetric) formula but this is still an open problem. 

We consider one-body perturbation terms $\sH_{\mathrm{pert}}(\bsig)$ added to the generic Hamiltonian \eqref{mostgeneralhamiltonian}:
\begin{align}
\sH_{\mathrm{pert}}(\bsig) \equiv - h_0 \sum_{i=1}^{n} \sigma_i - h_1 \sum_{i=1}^n \tau_i \sigma_i
\label{pert}
\end{align}
with $h_0\in [0,1]$, $h_1\in [0,1]$, $\tau_i \sim \mathrm{Poi}(\alpha n^{\theta-1})$ i.i.d. for $i=1,\ldots,n$ and $\alpha\in [0,1]$, $\theta \in \big (1/2, 7/8]$.
The first part of the perturbation, proportional to $h_0$, is called {\it homogeneous perturbation} while the second part proportional to $h_1$ is called
{\it Poisson perturbation}. Both are purely ferromagnetic such that the GKS inequalities remain valid. Note that in 
distribution $h_1 \sum_{i=1}^n \tau_i \sigma_i \stackrel{d}{=} h_1 \sum_{v=1}^{\Gamma} \sigma_{i_v}$
where $\Gamma \sim \mathrm{Poi}(\alpha n^{\theta})$ and $i_v$ is randomly and uniformly chosen from $\{1,\dots,n\}$.
While this second expression might seem more natural, the first equivalent expression allows a more compact notation in our analysis.
We associate to the total Hamiltonian \eqref{mostgeneralhamiltonian} + \eqref{pert}, i.e.
\begin{align}
\sH(\bsig)\equiv\sH_0(\bsig)	+\sH_{\mathrm{pert}}(\bsig)\,,\label{Htot}
\end{align}
 its 
partition function ${\cal Z}_{h_0, h_1, \alpha}$, Gibbs expectation $\<-\>_{h_0, h_1, \alpha}$, pressure $P_n(h_0, h_1, \alpha)\equiv n^{-1}\ln {\cal Z}_{h_0, h_1, \alpha}$ and average pressure 
$p_n(h_0, h_1, \alpha) \equiv \mathbb{E}\,P_n(h_0, h_1, \alpha)$ defined similarly as before with $\sH_0$ replaced by $\sH$. Here $\mathbb{E}$ is the expectation over all quenched variables, i.e., $J_X$, $X\subset\{1, \dots, n\}$ and $\btau = (\tau_1, \dots, \tau_n)$. An elementary argument shows that the perturbation does not 
modify the thermodynamic properties as long as $h_0 \to 0_+$ ($\alpha$ can be taken fixed). More precisely in Appendix \ref{proof3.3} we show
\begin{align}
 \vert p_n(h_0, h_1, \alpha) - p_n(0,0,0) \vert \leq h_0 + \frac{\alpha h_1}{n^{1-\theta}} \,.
 \label{approxinequ}
\end{align}
In particular $\lim_{h_0\to 0_+}\lim_{n\to +\infty}\vert p_n(h_0, h_1, \alpha) - p_n\vert = 0$.

We can now state the main concentration results. From now on in the rest of the paper it is understood that $n$ is always large enough.
%
\begin{thm}[Thermal concentration of the overlaps]\label{thm:thermalconc}
Assume the Hamiltonian $\sH$ given by \eqref{Htot} satisfies $\< \sigma_i \sigma_j \> - \< \sigma_i \> \< \sigma_j \> \geq 0$ for all $i, j=1, \dots, n$. 
Then for any $[\underline{h}, \bar h]\subset (0,1)$, $h_1 \in [0,1]$, $\alpha\in [0,1]$, we have for any instance of the random Hamiltonian
\begin{align*}
\int_{\underline{h}}^{\bar h} dh_0 \,\big\< \big(Q_k - \< Q_k \>_{h_0, h_1, \alpha}\big)^2 \big\>_{h_0, h_1, \alpha}
	\leq \frac{2k}{n}\,.
\end{align*}
\end{thm}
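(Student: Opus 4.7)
The plan is to express the thermal variance of $Q_k$ as a sum of two-point connected correlators, and then recognise that summing those correlators amounts to a derivative in $h_0$ of $\langle Q_1\rangle$, at which point the $h_0$-integral telescopes.

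First I would use the fact that the $k$ replicas are i.i.d.\ under $\langle-\rangle_{h_0,h_1,\alpha}$ to rewrite
\begin{align*}
\big\langle\big(Q_k-\langle Q_k\rangle\big)^2\big\rangle
= \langle Q_k^2\rangle - \langle Q_k\rangle^2
= \frac{1}{n^2}\sum_{i,j=1}^{n}\big[\langle\sigma_i\sigma_j\rangle^{k} - \langle\sigma_i\rangle^{k}\langle\sigma_j\rangle^{k}\big].
\end{align*}
Setting $a_{ij}=\langle\sigma_i\sigma_j\rangle$ and $b_{ij}=\langle\sigma_i\rangle\langle\sigma_j\rangle$, the identity $a^{k}-b^{k}=(a-b)\sum_{l=0}^{k-1}a^{l}b^{k-1-l}$ together with the crude bound $|a_{ij}|,|b_{ij}|\le 1$ yields $|a_{ij}^{k}-b_{ij}^{k}|\le k\,(a_{ij}-b_{ij})$ once we invoke the theorem's hypothesis $a_{ij}-b_{ij}\ge 0$ to drop the absolute values.

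Next I would exploit the structure of the homogeneous perturbation $-h_0\sum_{i}\sigma_i$ via the standard susceptibility identity
\begin{align*}
\frac{d}{dh_0}\langle\sigma_i\rangle_{h_0,h_1,\alpha} = \sum_{j=1}^{n}\big(\langle\sigma_i\sigma_j\rangle-\langle\sigma_i\rangle\langle\sigma_j\rangle\big),
\end{align*}
which implies
\begin{align*}
\sum_{i,j=1}^{n}\big(\langle\sigma_i\sigma_j\rangle-\langle\sigma_i\rangle\langle\sigma_j\rangle\big)
= n\,\frac{d}{dh_0}\langle Q_1\rangle_{h_0,h_1,\alpha}.
\end{align*}
Combining this with the previous step gives $\langle(Q_k-\langle Q_k\rangle)^2\rangle \le (k/n)\,\frac{d}{dh_0}\langle Q_1\rangle$.

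Finally I would integrate over $h_0\in[\underline h,\bar h]$. Since the right-hand side is a total derivative in $h_0$, the integral telescopes to $(k/n)\,[\langle Q_1\rangle_{\bar h,h_1,\alpha}-\langle Q_1\rangle_{\underline h,h_1,\alpha}]$, which is bounded by $2k/n$ from the trivial estimate $|\langle Q_1\rangle|\le 1$. There is really no hard step here: the only delicate point is the initial algebraic manoeuvre turning $\langle Q_k^2\rangle - \langle Q_k\rangle^2$ into a linear combination of connected two-spin correlators, where the GKS hypothesis is used precisely to convert $|a^k-b^k|$ into $k(a-b)$. The $h_1$ and $\alpha$ parameters play no role in this argument since the Poisson perturbation is held fixed throughout.
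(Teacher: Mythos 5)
Your proposal is correct and follows essentially the same route as the paper: the same factorization of $\langle Q_k^2\rangle-\langle Q_k\rangle^2$ into connected two-spin correlators bounded by $k(\langle\sigma_i\sigma_j\rangle-\langle\sigma_i\rangle\langle\sigma_j\rangle)$, followed by recognizing the sum as $n\,\frac{d}{dh_0}\langle Q_1\rangle$ (equivalently $n\,\frac{d^2P_n}{dh_0^2}$ in the paper) and integrating in $h_0$ with the trivial bound $|\langle Q_1\rangle|\le 1$.
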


The next two results use the concentration of the pressure \eqref{concpressuregeneral} for the total Hamiltonian \eqref{Htot}. 
This holds under Condition~\ref{conditionC} and the constant $C_P$ can easily be made independent of 
$h_0, h_1, \alpha$. 

\begin{thm}[Total concentration of the magnetization/first overlap]\label{thm:firstoverlap}
Assume the Hamiltonian $\sH$ given by \eqref{Htot} satisfies $\< \sigma_i \sigma_j \> - \< \sigma_i \> \< \sigma_j \> \geq 0$ for all $i, j=1, \dots, n$ and assume also that 
Condition~\ref{conditionC} holds.
Then for any $[\underline{h},\bar h]\subset (0,1)$, $h_1\in [0,1]$, $\alpha\in [0,1]$ we have
\begin{align*}
\int_{\underline{h}}^{\bar h} dh_0 \,\mathbb{E}\big\< \big(Q_1 - \mathbb{E}\< Q_1 \>_{h_0, h_1, \alpha}\big)^2 \big\>_{h_0, h_1, \alpha}
	\leq \frac{5C_P + 42}{n^{1/3}}\,.
\end{align*}
\end{thm}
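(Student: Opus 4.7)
The plan is to split the total fluctuation as
\begin{align*}
\mathbb{E}\big\< (Q_1 - \mathbb{E}\< Q_1\>)^2\big\>
= \mathbb{E}\big\< (Q_1 - \< Q_1\>)^2\big\> + \mathbb{E}\big[(\< Q_1\> - \mathbb{E}\< Q_1\>)^2\big]
\end{align*}
(all brackets and expectations at fixed $(h_0,h_1,\alpha)$). After integration over $h_0 \in [\underline{h},\bar h]$ the thermal part is bounded by $2/n \leq 2/n^{1/3}$ via Theorem~\ref{thm:thermalconc}, so the real work is to control the disorder part $V(h_0) := \mathbb{E}\big[(\< Q_1\>_{h_0,h_1,\alpha} - \mathbb{E}\< Q_1\>_{h_0,h_1,\alpha})^2\big]$.

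The structural input I would exploit is that $\<Q_1\>_{h_0,h_1,\alpha} = \partial_{h_0} P_n(h_0,h_1,\alpha)$, and hence $\mathbb{E}\<Q_1\> = \partial_{h_0} p_n$; moreover $P_n(\cdot,h_1,\alpha)$ and $p_n(\cdot,h_1,\alpha)$ are convex in $h_0$ (logarithms of a partition function whose exponent is affine in $h_0$), with derivatives bounded in absolute value by $1$. Bounding $V(h_0)$ thus reduces to comparing the derivatives of two convex functions which are close in mean square pointwise, with $\mathbb{E}[(P_n - p_n)^2] \leq C_P/n$ from the pressure-concentration hypothesis \eqref{concpressuregeneral}.

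For this comparison I would use the standard convex-smoothing lemma: for convex $f,g$ on an interval, writing $\Delta := f-g$, for every $\epsilon > 0$
\begin{align*}
|f'(h) - g'(h)| \;\leq\; \big[g'(h+\epsilon) - g'(h-\epsilon)\big] + \frac{|\Delta(h+\epsilon)| + 2|\Delta(h)| + |\Delta(h-\epsilon)|}{\epsilon},
\end{align*}
which follows from bracketing $f'(h)$ between its left and right secant slopes at step $\epsilon$ and then replacing $f$ by $g+\Delta$. Taking $f = P_n$, $g = p_n$, squaring (using $(A+B)^2 \leq 2A^2 + 2B^2$ and Cauchy--Schwarz $(|a|+2|b|+|c|)^2 \leq 6(a^2+b^2+c^2)$), taking $\mathbb{E}$, and integrating over $[\underline{h},\bar h]$, the ``noise'' term contributes at most a constant multiple of $C_P/(n\epsilon^2)$ via \eqref{concpressuregeneral}. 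The ``drift'' term $g'(h+\epsilon) - g'(h-\epsilon)$ is nonnegative and bounded by $2$ (so its square is at most twice itself), and its integral over $[\underline{h},\bar h]$ telescopes to $p_n(\bar h + \epsilon) - p_n(\bar h - \epsilon) - p_n(\underline{h}+\epsilon) + p_n(\underline{h}-\epsilon) \leq 4\epsilon$, giving a squared contribution of order $\epsilon$. Picking $\epsilon$ of order $n^{-1/3}$ balances the two terms and yields the $n^{-1/3}$ rate.

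The main obstacle is essentially arithmetic: choosing $\epsilon$ precisely and tracking the constants (the $2$'s from $(A+B)^2 \leq 2A^2 + 2B^2$, the $6$ from Cauchy--Schwarz, the telescoping $4\epsilon$) so that after adding the thermal $2/n$ the sum lies below $(5C_P + 42)/n^{1/3}$. Two minor checks are also required: $n$ must be large enough that $[\underline{h}-\epsilon,\bar h + \epsilon]\subset (0,1)$ so that the concentration bound \eqref{concpressuregeneral} applies uniformly over the shifted interval; and the Poisson perturbation must contribute only $O(n^\theta) = o(n)$ to $\sum_X \mathrm{Var}(J_X)$, which is guaranteed by $\theta \leq 7/8$, so that the constant $C_P$ in Condition~\ref{conditionC} can indeed be taken independent of $(h_0,h_1,\alpha)$. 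Notably, the GKS hypothesis of the theorem does not enter this argument at all; it is used only to make Theorem~\ref{thm:thermalconc} applicable for the thermal half of the decomposition.
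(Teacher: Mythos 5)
Your proposal is correct and follows essentially the paper's own route: the same decomposition into thermal plus disorder fluctuations, with the thermal part handled by Theorem~\ref{thm:thermalconc} and the disorder part by comparing the convex functions $P_n(h_0)$ and $p_n(h_0)$ through their $h_0$-derivatives (the paper's Lemma~\ref{thm:bound-by-convexity} together with the pressure concentration \eqref{concpressuregeneral} and the choice $\delta\sim n^{-1/3}$), which is precisely the content of Lemma~\ref{lemma4.3}. The only difference is constant bookkeeping: your variant of the convex-smoothing inequality yields a somewhat larger numerical prefactor than $5C_P+42$, but this does not affect the substance of the argument.
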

\noindent {\bf Remark:} Let us make a few remarks about these two theorems. First of all the Poisson perturbation is not needed and we can set $h_1=\alpha=0$. Second, when both GKS inequalities hold $2k/n$ can be replaced by $k/n$ in Theorem \ref{thm:thermalconc} as it will become clear from the proof.
More interestingly: Assuming only $\< \sigma_i \sigma_j \> - \< \sigma_i \> \< \sigma_j \> \geq 0$ is weaker than assuming both GKS inequalities and even weaker than assuming 
only the second one. This assumption is satisfied (for example) for all Hamiltonians 
satisfying the FKG inequality. An example is $J_X\geq 0$ except for one-body terms (magnetic fields) that may have any sign. So in particular, Theorems 
\ref{thm:thermalconc} and \ref{thm:firstoverlap} cover the random field Ising model (RFIM). Another example is strong enough ferromagnetic two-body terms and any sign for 
magnetic fields and higher order interactions. 

The next theorem assumes both GKS inequalities and its extension to systems satisfying only FKG is an open problem. It would be of interest to extend this theorem to the RFIM. 
Moreover both the homogeneous and Poisson perturbations play an important role in the proof. 

\begin{thm}[Total concentration of the overlaps]\label{thm:multiconc}
Assume the Hamiltonian $\sH$ given by \eqref{Htot} satisfies both GKS inequalities (in other words assume 
all $J_X\geq 0$) and also that Condition~\ref{conditionC} holds. Then for any $[\underline h, \bar h]\subset (0,1)$, $[\underline\alpha, \bar\alpha]\subset (0,1)$, $h_1\in (0,1]$ and $\theta\in(1/2,7/8]$ we have
\begin{align*}
\int_{\underline h}^{\bar h} dh_0 \int_{\underline \alpha}^{\bar \alpha} d\alpha \, 
\mathbb{E}\big\< \big(Q_k - \mathbb{E} \< Q_k \>_{h_0, h_1, \alpha} \big)^2\big\>_{h_0, h_1, \alpha}
\leq\frac{4k^2}{(\tanh h_1)^k}\frac{\sqrt{5C_P+40}}{n^{(\theta-1/2)/3}} \,.
\end{align*}
\end{thm}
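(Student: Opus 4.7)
My plan is to decompose the total fluctuation in the usual way
\begin{align*}
\mathbb{E}\bigl\langle (Q_k-\mathbb{E}\langle Q_k\rangle)^2\bigr\rangle
= \mathbb{E}\bigl\langle (Q_k-\langle Q_k\rangle)^2\bigr\rangle
+\mathbb{E}\bigl[(\langle Q_k\rangle -\mathbb{E}\langle Q_k\rangle)^2\bigr],
\end{align*}
and handle the two pieces separately. The thermal part is controlled directly by Theorem~\ref{thm:thermalconc}: integrating over $h_0\in[\underline h,\bar h]$ gives a bound of $2k/n$, and integrating against $d\alpha$ on $[\underline\alpha,\bar\alpha]$ keeps it of smaller order than the right-hand side of the theorem. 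The real work lies with the disorder term $\mathbb{E}[(\langle Q_k\rangle-\mathbb{E}\langle Q_k\rangle)^2]$.

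For the disorder term, the central identity I would exploit comes from differentiating $p_n$ in the Poisson intensity $\alpha$. Campbell's formula $\partial_\alpha\mathbb{E}[g(\boldsymbol\tau)]=n^{\theta-1}\sum_i\mathbb{E}[g(\boldsymbol\tau+\mathbf{e}_i)-g(\boldsymbol\tau)]$ applied to $g=n^{-1}\ln\mathcal Z$, together with $\langle e^{h_1\sigma_i}\rangle=\cosh h_1(1+\langle\sigma_i\rangle\tanh h_1)$ and the Taylor expansion of $\ln(1+x)$ (convergent because GKS forces $\langle\sigma_i\rangle\in[0,1]$), yields
\begin{align*}
\partial_\alpha p_n(h_0,h_1,\alpha)
=n^{\theta-1}\ln\cosh h_1+n^{\theta-1}\sum_{\ell\ge 1}\frac{(-1)^{\ell+1}}{\ell}(\tanh h_1)^\ell\,\mathbb{E}\langle Q_\ell\rangle.
\end{align*}
Introducing the random counterpart $\mathcal R_n\defeq\frac{1}{n}\sum_{i=1}^n\ln(1+\langle\sigma_i\rangle\tanh h_1)$, one has $\mathcal R_n=\sum_{\ell\ge 1}\frac{(-1)^{\ell+1}}{\ell}(\tanh h_1)^\ell\langle Q_\ell\rangle$ and $\mathbb{E}\mathcal R_n=(\partial_\alpha p_n)/n^{\theta-1}-\ln\cosh h_1$, so that controlling the disorder fluctuations of the overlaps reduces to controlling the concentration of $\mathcal R_n$.

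To prove $\int_{\underline\alpha}^{\bar\alpha}\mathbb{E}[(\mathcal R_n-\mathbb{E}\mathcal R_n)^2]\,d\alpha=O(n^{-(\theta-1/2)/3})$ I would combine: (i) the convexity of $\alpha\mapsto p_n(h_0,h_1,\alpha)$, itself a consequence of the GKS monotonicity $\langle\sigma_i\rangle_{\boldsymbol\tau+\mathbf e_j}\ge\langle\sigma_i\rangle_{\boldsymbol\tau}$; and (ii) the pressure concentration $\mathbb{E}[(P_n-p_n)^2]\le C_P/n$ from Condition~\ref{conditionC} (via Proposition~\ref{thm:free_ent_conc}). The standard convex-concentration lemma then converts the pointwise $L^2$ bound on $P_n-p_n$ into an integrated $L^2$ bound on the $\alpha$-derivative, and hence on $\mathcal R_n-\mathbb{E}\mathcal R_n$. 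Balancing the Poisson prefactor $n^{\theta-1}$, the pressure scale $n^{-1/2}$, and the smoothing window needed to pass from pointwise to integrated concentration produces the claimed rate for every $\theta\in(1/2,7/8]$. Finally, I would isolate the coefficient $\langle Q_k\rangle-\mathbb{E}\langle Q_k\rangle$ from the Taylor series $\mathcal R_n-\mathbb{E}\mathcal R_n$, which accounts for the $(\tanh h_1)^{-k}$ and $k^2$ factors in the final bound.

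The \emph{main obstacle} is precisely this last extraction step. A naive triangle-inequality bound on the remainder $\sum_{\ell\ne k}\tfrac{(-1)^{\ell+1}}{\ell}(\tanh h_1)^\ell(\langle Q_\ell\rangle-\mathbb{E}\langle Q_\ell\rangle)$ gives an $O(1)$ contribution that does not vanish with $n$, so isolating a single coefficient requires genuinely using GKS structure---nonnegativity of $\langle\sigma_i\rangle$, monotonicity in $\boldsymbol\tau$, and the sign alternation in the Taylor expansion---rather than just the $L^2$ norm of $\mathcal R_n-\mathbb{E}\mathcal R_n$. A secondary difficulty is that the convex-concentration argument must survive with the right rate even though $P_n$ has no pathwise derivative in $\alpha$ (only the expected pressure $p_n$ does), forcing the use of finite differences adapted to the Poisson structure rather than genuine derivatives.
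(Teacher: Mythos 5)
Your architecture through the first two steps matches the paper's: the decomposition into thermal plus disorder fluctuations, the control of the thermal part by Theorem~\ref{thm:thermalconc}, the expansion of the $\alpha$-derivative of the pressure into $\ln\cosh h_1 + \sum_{\ell\ge1}\frac{(-1)^{\ell+1}}{\ell}(\tanh h_1)^\ell\,\EE\<Q_\ell\>$, and the convexity-plus-pressure-concentration argument (Lemma~\ref{thm:bound-by-convexity} together with \eqref{concpressuregeneral}) are exactly Lemmas~\ref{lemma4.5} and~\ref{lemma4.6} of the paper. But the step you yourself flag as the main obstacle --- extracting the single coefficient $k$ from the $L^2$-small alternating series --- is left entirely unresolved, and it is the heart of the proof. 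The paper does \emph{not} try to isolate a coefficient of $\mathcal R_n-\EE\mathcal R_n$; instead it (a) multiplies the derivative fluctuation by $\EE_{\btau}\<Q_K\>$ and applies Cauchy--Schwarz so that the object controlled is the generating series of \emph{covariances} ${\rm Cov}_{K,k}$ defined in \eqref{cov}; (b) proves via the Harris inequality the monotonicity $\frac1k{\rm Cov}_{K,k}\ge\frac{\tanh h_1}{k+1}{\rm Cov}_{K,k+1}$ (Lemma~\ref{thm:Harris:application}), which is what allows the absolute value to be pushed inside the sum after pairing consecutive terms, thereby killing the sign alternation; and (c) runs a two-step induction on $k$ (odd to even, even to odd) whose base case is the concentration of the magnetization w.r.t.\ the quenched couplings, Lemma~\ref{lemma4.3} --- an ingredient your plan never invokes. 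Without (a)--(c), the $L^2$ bound on $\mathcal R_n-\EE\mathcal R_n$ genuinely cannot yield the $k^2(\tanh h_1)^{-k}$ bound, for the reason you state yourself: the naive bound on the complementary series is $O(1)$.

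There is a second, smaller gap. Since $P_n$ has no pathwise $\alpha$-dependence, the convexity argument necessarily runs on $\hat P_n(\alpha)=\EE_{\btau}P_n$, whose $\alpha$-derivative \eqref{eq:Ft-der-alpha} produces $\EE_{\btau}\<Q_\ell\>$, not $\<Q_\ell\>$. Your route therefore controls at best $\EE_{\btau}\<Q_k\>-\EE\<Q_k\>$, and the Poisson fluctuations $\EE_{\btau}[(\<Q_k\>-\EE_{\btau}\<Q_k\>)^2]$ --- the middle term of the paper's three-way decomposition \eqref{eq:full-concentration-Qp:1} --- require a separate argument, which the paper supplies via an Efron--Stein estimate combined with GKS monotonicity in $\btau$ (Lemma~\ref{lemma4.4}). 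Your proposal acknowledges the absence of a pathwise derivative only as a ``secondary difficulty'' about finite differences, but does not provide this bound; it is needed for the stated theorem, not just for technical comfort.
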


The bound yileds a decay $\mathcal{O}(n^{-1/8})$ for $\theta=7/8$. Also note that the prefactor on the r.h.s. grows exponentially fast with $k$. The details of the proof
show that a slowly growing $h_1$ can be accommodated and we can take $h_1= {\cal O}(\ln n)$ to mitigate this growth.


%
%
\section{Proofs of concentrations for the overlaps}\label{sec:conc}
The main aim of this section is to prove Theorem \ref{thm:multiconc}. The proof is generic and essentially requires only 
two ingredients: $i)$ That the Hamiltonian $\sH$ given by \eqref{Htot} is purely ferromagnetic so that the two GKS inequalities \eqref{eq:GKS} are verified;
$ii)$ the pressure of the perturbed model concentrates in the sense of \eqref{concpressuregeneral}. In the process we 
also obtain the proofs of Theorems \ref{thm:thermalconc} and \ref{thm:firstoverlap}. 

To ease the notations in this section we do not indicate explicitly the arguments $h_0$, $h_1$, $\alpha$ in the Gibbs brackets and pressure.

%
\subsection{Preliminary remarks}

Theorem \ref{thm:multiconc} will be a consequence of the individual control of three types of overlap fluctuations. 
One can verify by expanding the squares that the total overlaps fluctuations can be decomposed as
\begin{align}
\mathbb{E}  \big\< (Q_k - \mathbb{E} \< Q_k \> )^2 \big\>
	& = \mathbb{E} \big\< \big(Q_k - \< Q_k \> \big)^2 \big\>  + \mathbb{E} \big [ \big( \< Q_k \> - \mathbb{E}_{\btau} \< Q_k \> \big)^2 \big ] + \mathbb{E} \big [ \big( \mathbb{E}_{\btau} \< Q_k \> - \mathbb{E} \< Q_k \> \big)^2 \big ]\,.
	\label{eq:full-concentration-Qp:1}
\end{align}
The first type of fluctuations are purely {\it thermal fluctuations} and are controlled in Theorem \ref{thm:thermalconc} thanks to the homogeneous part of the perturbation in \eqref{pert}; for the analysis of these fluctuations the Poisson perturbation 
 can be dropped. The last two terms are the {\it disorder 
fluctuations} due to the quenched variables. Their control requires the Poisson perturbation\footnote{It is an open problem to assess if these can be dropped and the 
fluctuations controlled only thanks to the homogeneous perturbation.}. 
The second term are fluctuations directly related to the Poisson perturbation itself, called {\it Poisson fluctuations}, and is controlled by 
Lemma \ref{lemma4.4}. The third term are the fluctuations 
due to all other quenched couplings in the unperturbed Hamiltonian and is controlled by Lemma \ref{lemma4.8}.
In Section~\ref{sec:combiningConc} we show how to combine 
all these concentration results in order to obtain Theorem \ref{thm:multiconc}.

We will repeatedly make use of the following  technical lemma:
\begin{lemma}[A bound on differences of derivatives due to convexity]
Let $I\subset \mathbb{R}$ an open interval and $G: x\in I\mapsto G(x)\in \mathbb{R}$ and $g: x\in I\mapsto g(x)\in\mathbb{R}$ convex and differentiable functions. Fix $x\in I$ and let 
$\delta>0$ small enough so that $x\pm\delta\in I$.
Define $C^{+}_\delta(x) \equiv g'(x+\delta) - g'(x) \geq 0$ and $C^{-}_\delta(x) \equiv g'(x) - g'(x-\delta) \geq 0$. Then
\begin{align*}
|G'(x) - g'(x)| \leq \delta^{-1} \sum_{u \in \{x-\delta, x, x+\delta\}} |G(u)-g(u)| + C^{+}_\delta(x) + C^{-}_\delta(x)\,.
\end{align*}
\label{thm:bound-by-convexity}
\end{lemma}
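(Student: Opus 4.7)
The plan is to exploit convexity of $G$ to sandwich $G'(x)$ between the forward and backward finite differences $[G(x+\delta)-G(x)]/\delta$ and $[G(x)-G(x-\delta)]/\delta$, and then to rewrite those finite differences using $G = g + (G-g)$ so as to separate a piece controlled by convexity of $g$ from a piece controlled by the pointwise differences $|G-g|$ at the three points $x-\delta,x,x+\delta$.

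First I would write the chain of inequalities coming from convexity of $G$, namely
\begin{align*}
\frac{G(x)-G(x-\delta)}{\delta}\,\leq\, G'(x)\,\leq\,\frac{G(x+\delta)-G(x)}{\delta}.
\end{align*}
Subtracting $g'(x)$ on all sides and decomposing $G=g+(G-g)$ yields, for the upper estimate,
\begin{align*}
G'(x)-g'(x)\,\leq\,\Big(\frac{g(x+\delta)-g(x)}{\delta}-g'(x)\Big)+\frac{(G-g)(x+\delta)-(G-g)(x)}{\delta}.
\end{align*}
Convexity of $g$ bounds the first parenthesis by $g'(x+\delta)-g'(x)=C^{+}_{\delta}(x)$, while the second term is controlled by $\delta^{-1}(|G(x+\delta)-g(x+\delta)|+|G(x)-g(x)|)$ via the triangle inequality.

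Symmetrically, for the lower estimate I would use the backward finite difference together with the convexity inequality $[g(x)-g(x-\delta)]/\delta\geq g'(x-\delta)$, which produces the same kind of bound with $C^{-}_{\delta}(x)$ and with $|G-g|$ evaluated at $x$ and $x-\delta$. Taking the maximum of the absolute values of the upper and lower estimates, and crudely extending each partial sum over $\{x-\delta,x\}$ or $\{x,x+\delta\}$ to the full three-point sum (which is legitimate since $|G-g|\geq 0$) and similarly adding the nonnegative quantity of the other one-sided convexity gap, delivers the claimed bound
\begin{align*}
|G'(x)-g'(x)|\,\leq\,\delta^{-1}\!\!\sum_{u\in\{x-\delta,x,x+\delta\}}\!\!|G(u)-g(u)|+C^{+}_{\delta}(x)+C^{-}_{\delta}(x).
\end{align*}

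There is no real obstacle here: the nonnegativity of $C^{\pm}_{\delta}$, which follows from convexity of $g$ (monotonicity of $g'$), and the assumption that $x\pm\delta\in I$ so that all evaluations are well-defined, are the only subtleties. The only mild bookkeeping step is to verify that enlarging the two-point sums appearing in the upper and lower one-sided estimates into the full three-point sum, plus adding both $C^{+}_{\delta}(x)$ and $C^{-}_{\delta}(x)$ regardless of the sign of $G'(x)-g'(x)$, does not spoil the bound—which is immediate from the nonnegativity of every term added.
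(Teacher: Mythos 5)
Your proposal is correct and follows essentially the same route as the paper's proof: sandwich $G'(x)$ between the one-sided difference quotients by convexity of $G$, compare the corresponding difference quotients of $g$ with $g'(x\pm\delta)$ by convexity of $g$ to produce $C^{\pm}_{\delta}(x)$, and absorb the remaining terms into $\delta^{-1}\sum_{u}|G(u)-g(u)|$. The only cosmetic difference is that you phrase the algebra via the decomposition $G=g+(G-g)$ and the triangle inequality, while the paper keeps the signed differences explicit; the content is identical.
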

\begin{proof}
Convexity implies that we have
\begin{align*}
G'(x) - g'(x)
 	& \leq \frac{G(x+\delta) - G(x)}{\delta} - g'(x) \nonumber \\
	& \leq \frac{G(x+\delta) - G(x)}{\delta} - g'(x) + g'(x+\delta) - \frac{g(x+\delta) - g(x)}{\delta} \nonumber \\
	& = \frac{G(x+\delta) - g(x+\delta)}{\delta} - \frac{G(x) - g(x)}{\delta} + C^+_\delta(x) \,,  \\
G'(x) - g'(x)
	& \geq \frac{G(x) - G(x-\delta)}{\delta} - g'(x) + g'(x-\delta) - \frac{g(x) - g(x-\delta)}{\delta} \nonumber \\
	& = \frac{G(x) - g(x)}{\delta} - \frac{G(x-\delta) - g(x-\delta)}{\delta} - C^-_{\delta}(x) \,.
\end{align*}
Combining these two inequalities ends the proof.
\end{proof}
\subsection{Thermal fluctuations of overlaps: Proof of Theorem \ref{thm:thermalconc}}
We start by considering the thermal fluctuations for a fixed realization of quenched variables. 
Note that 
\begin{align}
\frac{dP_n}{dh_0} = \frac{1}{n} \sum_{i=1}^{n} \< \sigma_i \> = \< Q_1 \> \,,\qquad 
\frac{1}{n} \frac{d^2 P_n}{dh_0^2} & = \frac{1}{n^2} \sum_{i,j=1}^{n} \left ( \< \sigma_i \sigma_j \> - \< \sigma_i \> \< \sigma_j \> \right ) = \big\< (Q_1 - \< Q_1 \> )^2 \big\>\,.\label{derivativesfF}
\end{align}
The second identity shows that the pressure $P_n$, as well as its expectation $p_n$, are convex in $h_0$ (a generic fact in statistical mechanics models). 

%
%
By the definition \eqref{def:overlap} of $Q_k$ we have
\begin{align}
\big\< (Q_k - \< Q_k \>)^2 \big\> 
	= \< Q_k^2 \> - \< Q_k \>^2  \-
	& = \frac{1}{n^2} \sum_{i,j=1}^{n} \left ( \< \sigma_i \sigma_j \>^k - \< \sigma_i \>^k \< \sigma_j \>^k \right ) \nonumber \\
	& = \frac{1}{n^2} \sum_{i,j=1}^{n}  ( \< \sigma_i \sigma_j \> - \< \sigma_i \> \< \sigma_j \> ) \sum_{l=0}^{k-1} \< \sigma_i \sigma_j \>^{k-1-l} \< \sigma_i \>^l \< \sigma_j \>^l \,.
	\label{eq:concentration1:1}
\end{align}
Using $\< \sigma_i \sigma_j \> - \< \sigma_i \> \< \sigma_j \> \geq 0$ for all $i, j=1, \ldots, n$ and the triangle inequality, \eqref{eq:concentration1:1} is upper bounded as
\begin{align*}
\big\< (Q_k - \< Q_k \>)^2 \big\>
	& \leq \frac{k}{n^2} \sum_{i,j=1}^{n}  ( \< \sigma_i \sigma_j \> - \< \sigma_i \> \< \sigma_j \> )\,.
\end{align*}
Hence integrating this inequality over $h_0 \in [\underline{h},\bar h]$ and using \eqref{derivativesfF} we have
\begin{align*}
\int_{\underline{h} }^{\bar h} dh_0 \,\big\< (Q_k - \< Q_k \>)^2 \big\>
	& \leq k \int_{\underline{h}}^{\bar h} dh_0 \,\frac{1}{n} \frac{d^2 P_n}{dh_0^2} \-
	= \frac{k}{n} \left [ \< Q_1 \> \right ]_{h_0=\underline{h}}^{h_0=\bar h} \-
	\leq \frac{2k}{n}\,.
\end{align*}
Note that if the first GKS inequality also holds then $\langle Q_1\rangle \geq 0$ so 
$0\leq \left [ \< Q_1 \> \right ]_{h_0=\underline{h}}^{h_0=\bar h}\leq 1$ and $2k/n$ becomes $k/n$.$\QED$
%
\subsection{Disorder fluctuations of the magnetization}
Before considering the concentration of general overlaps we need to control the quenched fluctuations of the first overlap $Q_1$, that is the magnetization. 
Indeed, our proof of the concentration of the overlaps w.r.t. the quenched variables in Section~\ref{sec:4.5} is 
based on an induction argument where the induction is on the order
$k$ of the overlaps $Q_k$, and the following lemma will serve as the base case for the induction. In order to control 
these fluctuations the homogeneous perturbation alone is again sufficient.
%
\begin{lemma}[Concentration of the magnetization w.r.t. the quenched disorder]\label{lemma4.3}
Assume that Condition~\ref{conditionC} holds. Then for any $[\underline{h},\bar h]\subset (0,1)$ we have
\begin{align*}
\int_{\underline h}^{\bar h} dh_0 \,\mathbb{E} \big [ (\<Q_1\> - \mathbb{E}\<Q_1\>)^2 \big ] \leq \frac{5C_P +40}{n^{1/3}}\,.
\end{align*}
\label{thm:concentration-Q1}
\end{lemma}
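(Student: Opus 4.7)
The plan is to reduce the concentration of $\<Q_1\>$, which is exactly the $h_0$-derivative of $P_n$, to the already-available concentration of $P_n$ itself, using convexity of the pressure in $h_0$ as the only additional ingredient.

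First I note that by \eqref{derivativesfF}, $\<Q_1\> = dP_n/dh_0$ and symmetrically $\mathbb{E}\<Q_1\> = dp_n/dh_0$; moreover both $P_n$ and $p_n$ are convex in $h_0$ since $n^{-1} d^2 P_n/dh_0^2$ is a Gibbs variance and hence nonnegative, and expectation preserves convexity. This places us exactly in the setting of Lemma~\ref{thm:bound-by-convexity}, which I apply with $G = P_n$, $g = p_n$, at each $h_0 \in [\underline h, \bar h]$ and with a parameter $\delta > 0$ to be optimized at the end. This yields the pointwise bound
\[
|\<Q_1\> - \mathbb{E}\<Q_1\>| \leq \delta^{-1}\!\!\sum_{u \in \{h_0-\delta,\,h_0,\,h_0+\delta\}}\!\! |P_n(u) - p_n(u)| + C^+_\delta(h_0) + C^-_\delta(h_0),
\]
with $C^\pm_\delta(h_0) \geq 0$ deterministic (they depend only on $p_n'$).

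Squaring this and applying the elementary $(a_1+\cdots+a_5)^2 \leq 5 \sum a_i^2$, then taking $\mathbb{E}$ and integrating over $h_0 \in [\underline h, \bar h]$, I would dispatch the three pressure-difference terms via the pressure concentration \eqref{concpressuregeneral} (which follows from Condition~\ref{conditionC}, see Appendix~\ref{sec:free_ent_conc}), contributing $O(C_P/(\delta^2 n))$. For the convexity defects $C^\pm_\delta$ I would combine two observations: first, $|p_n'(h_0)| = |\mathbb{E}\<Q_1\>| \leq 1$ implies the crude uniform bound $C^\pm_\delta(h_0) \leq 2$; second, the integral telescopes,
\[
\int_{\underline h}^{\bar h} C^+_\delta(h_0)\,dh_0 = \bigl[p_n(\bar h+\delta) - p_n(\bar h)\bigr] - \bigl[p_n(\underline h+\delta) - p_n(\underline h)\bigr] \leq 2\delta,
\]
again because $|p_n'|\leq 1$, and similarly for $C^-_\delta$. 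Multiplying these two bounds gives $\int (C^\pm_\delta)^2\,dh_0 \leq 4\delta$, hence an $O(\delta)$ contribution overall.

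Combining the two contributions produces an estimate of the form $A\,C_P/(\delta^2 n) + B\delta$ for explicit numerical constants, and optimizing by setting $\delta \propto n^{-1/3}$ balances the two terms and delivers the claimed $n^{-1/3}$ rate with the stated constants. The substantive step is really Lemma~\ref{thm:bound-by-convexity} itself, i.e.\ the idea that convexity allows one to turn concentration of a function into concentration of its derivative at the price of a worse exponent ($n^{-1/2}$ becomes $n^{-1/3}$); everything else is bookkeeping. I do not anticipate any genuine obstacle: the argument uses only convexity in $h_0$ and boundedness of the magnetization, so in particular neither GKS nor FKG enters this base case — which is consistent with Lemma~\ref{lemma4.3} being invoked later as the base of an induction that itself carries the stronger structural hypotheses.
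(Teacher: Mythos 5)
Your proposal is correct and follows essentially the same route as the paper: reduce $\<Q_1\>-\mathbb{E}\<Q_1\>$ to the difference of $h_0$-derivatives of the convex pressures, apply Lemma~\ref{thm:bound-by-convexity}, square with the factor-$5$ inequality, control the pressure terms by \eqref{concpressuregeneral} and the convexity defects $C^\pm_\delta$ by the crude bound $C^\pm_\delta\le 2$ together with the telescoping/mean-value estimate $\mathcal{O}(\delta)$, then optimize $\delta\propto n^{-1/3}$. Your observation that neither GKS nor FKG is needed here matches the paper's own remark following the lemma.
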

\noindent {\bf Remark:}
 There is no need to assume $J_X \geq 0$ here. So this lemma holds generally even if GKS or FKG inequalities do not hold.

\noindent {\bf Remark:}
 Combining Theorem \ref{thm:thermalconc} and Lemma \ref{lemma4.3} yields Theorem \ref{thm:firstoverlap}.
\begin{proof}
Below it will be convenient to indicate explicitly the $h_0$ dependence in the pressure. 
Recall $p_n(h_0)= \E\, P_n(h_0)$. From \eqref{derivativesfF} we have
\begin{align*}
\<Q_1\> - \mathbb{E}\<Q_1\> = \frac{dP_n(h_0)}{dh_0} - \frac{dp_n(h_0)}{dh_0}\,.
\end{align*}
Since $P_n$ and $p_n$ are convex in $h_0$ as seen from \eqref{derivativesfF}, we can use Lemma~\ref{thm:bound-by-convexity} to obtain
\begin{align*}
| \<Q_1\> - \mathbb{E}\<Q_1\> | \leq \delta^{-1} \sum_{u \in \{h_0-\delta, \delta, h_0+\delta\} } |P_n(u) - p_n(u)| + C^{+}_\delta(h_0) + C^{-}_\delta(h_0)
\end{align*}
where
\begin{align}
C^{+}_\delta(h_0) \equiv \frac{dp_n(h_0 + \delta)}{dh_0} - \frac{dp_n(h_0)}{dh_0} \geq 0\,,
\hspace{1cm} C^{-}_\delta(h_0) \equiv \frac{dp_n(h_0)}{dh_0} - \frac{dp_n(h_0 - \delta)}{dh_0} \geq 0\,.\label{C+-}
\end{align}
Squaring both sides, applying $ ( \sum_{r=1}^{k} u_r )^2 \leq k \sum_{r=1}^{k} u_r^2$, and then taking an expectation we have
\begin{align}
\mathbb{E} \big [ ( \<Q_1\> - \mathbb{E}\<Q_1\> )^2 \big ]
	& \leq 5 \delta^{-2}  \sum_{u \in \{h_0-\delta, \delta, h_0+\delta\} } \mathbb{E} [ (P_n(u) - p_n(u))^2 ] + 5 C^{+}_\delta(h_0)^2 + 5  C^{-}_\delta(h_0)^2\,.
	\label{eq:concentration-Q1:1}
\end{align}
Under the assumption \eqref{concpressuregeneral} about concentration of the pressure the first term is smaller than $5C_P/(n\delta^2)$. Next, using $|\frac{dp_n}{dh_0}|= |\E\<Q_1\>| \leq 1$ 
allows to assert from \eqref{C+-} the crude bound $C_\delta^{\pm}(h_0)\le 2$. Then using $C_\delta^{\pm}(h_0) \geq 0$,
\begin{align*}
\int_{\underline h}^{\bar h} dh_0 \big ( C^{+}_\delta(h_0)^2 + C^{-}_\delta(h_0)^2 \big )
	& \leq 2 \int_{\underline h}^{\bar h} dh_0 \big ( C^{+}_\delta(h_0) + C^{-}_\delta(h_0) \big ) 	
	\\
	& = 2 \big [ \big(p_n(\bar h+\delta) - p_n(\bar h-\delta)\big) + \big(p_n(\underline h-\delta) - p_n(\underline h + \delta)\big) \big ] \leq 8 \delta
\end{align*}
where the mean value theorem has been used to get the last inequality. When \eqref{eq:concentration-Q1:1} is integrated over $h_0$ we reach
\begin{align*}
\int_{\underline h}^{\bar h} dh_0\, \mathbb{E} \big [ (\<Q_1\> - \mathbb{E}\<Q_1\>)^2 \big ] \leq \frac{5C_P}{n\delta^2} + 40\delta \,.
\end{align*}
The proof is ended by optimizing the bound by choosing $\delta = n^{-1/3}$ for $n$ large enough. 
\end{proof}
\subsection{Poisson fluctuations of overlaps}
%
\begin{lemma}[Concentration of overlaps w.r.t. the Poisson perturbation]\label{lemma4.4}
Assume the Hamiltonian $\sH$ given by \eqref{Htot} is fully ferromagnetic so that both GKS inequalities hold. Then for any $[\underline h , \bar h]\subset(0,1)$ we have
\begin{align*}
\int_{\underline h}^{\bar h} dh_0 \, \mathbb{E}_{\btau} \big[ ( \< Q_k \> - \mathbb{E}_{\btau} \< Q_k \>)^2 \big] \leq \frac{\alpha k h_1}{n^{1-\theta}}\,.
\end{align*}
\label{thm:concentration-Poisson}
\end{lemma}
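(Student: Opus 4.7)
The plan is to invoke a Poincaré-type inequality for Poisson distributions in order to reduce the $\btau$-variance to a control of the discrete increments $D_i \<Q_k\> \equiv \<Q_k\>|_{\tau_i+1} - \<Q_k\>$ obtained by bumping a single $\tau_i$ by one, and then to bound those increments by the GKS inequalities in terms of the thermal fluctuations of the magnetization $Q_1$, which are already controlled by Theorem~\ref{thm:thermalconc}. Since the $\tau_i$ are i.i.d.\ Poisson of rate $\lambda = \alpha n^{\theta-1}$, the Poisson Poincaré inequality (classical, e.g.\ Klaassen/Bobkov--Ledoux) tensorised over the $n$ coordinates yields
\begin{align*}
\mathbb{E}_\btau\big[(\<Q_k\> - \mathbb{E}_\btau\<Q_k\>)^2\big] \le \lambda \sum_{i=1}^n \mathbb{E}_\btau\big[(D_i \<Q_k\>)^2\big].
\end{align*}

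Next I would compute the effect of $\tau_i \mapsto \tau_i + 1$, which amounts to inserting the single ferromagnetic term $-h_1 \sigma_i$ into the Hamiltonian. Using $e^{h_1 \sigma_i} = \cosh h_1 + \sigma_i \sinh h_1$ and writing $t \equiv \tanh h_1$, a direct manipulation gives
\begin{align*}
\<\sigma_j\>|_{\tau_i+1} - \<\sigma_j\> \,=\, \frac{t\,\big(\<\sigma_i\sigma_j\> - \<\sigma_i\>\<\sigma_j\>\big)}{1 + t\<\sigma_i\>} \,\in\, \Big[\,0,\; t\,\big(\<\sigma_i\sigma_j\>-\<\sigma_i\>\<\sigma_j\>\big)\Big],
\end{align*}
where both GKS inequalities are crucial: GKS~I yields $\<\sigma_i\>\ge 0$ so the denominator is $\ge 1$, and GKS~II yields a nonnegative numerator. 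The elementary inequality $a^k - b^k \le k(a-b)$ for $0\le b\le a\le 1$ then gives
\begin{align*}
0 \,\le\, D_i\<Q_k\> \,=\, \frac{1}{n}\sum_{j=1}^n\big(\<\sigma_j\>|_{\tau_i+1}^k - \<\sigma_j\>^k\big) \,\le\, \frac{k\,t}{n}\sum_{j=1}^{n}\big(\<\sigma_i\sigma_j\>-\<\sigma_i\>\<\sigma_j\>\big).
\end{align*}
Moreover $D_i \<Q_k\> \in [0,1]$ by GKS monotonicity and the bounds $\<Q_k\>,\<Q_k\>|_{\tau_i+1}\in[0,1]$, so $(D_i\<Q_k\>)^2 \le D_i\<Q_k\>$. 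Summing over $i$ and recognising the double sum as $n^2\<(Q_1-\<Q_1\>)^2\>$ yields
\begin{align*}
\sum_{i=1}^n (D_i\<Q_k\>)^2 \,\le\, \sum_{i=1}^n D_i\<Q_k\> \,\le\, \frac{k\,t}{n}\sum_{i,j=1}^n \big(\<\sigma_i\sigma_j\>-\<\sigma_i\>\<\sigma_j\>\big) \,=\, k\,t\,n\,\<(Q_1-\<Q_1\>)^2\>.
\end{align*}

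Combining these ingredients and integrating over $h_0 \in [\underline h,\bar h]$, I can swap the $h_0$-integral with $\mathbb{E}_\btau$ and apply Theorem~\ref{thm:thermalconc} (in the sharpened $k/n$ form valid when both GKS inequalities hold, with $k=1$), which bounds the inner $h_0$-integral by $1/n$ pointwise in $\btau$. This gives
\begin{align*}
\int_{\underline h}^{\bar h} dh_0 \,\mathbb{E}_\btau\big[(\<Q_k\> - \mathbb{E}_\btau\<Q_k\>)^2\big] \,\le\, \lambda\, k\,t\,n\cdot \frac{1}{n} \,=\, \alpha\,k\,t\,n^{\theta-1} \,\le\, \frac{\alpha k h_1}{n^{1-\theta}},
\end{align*}
using $t=\tanh h_1\le h_1$, which is the claimed bound. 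The main obstacle is the clean invocation of the Poisson Poincaré inequality in a form directly applicable to the $\btau$-variance; once that tool is in hand, every remaining step is a short exercise using GKS monotonicity and the already-established thermal concentration of $Q_1$.
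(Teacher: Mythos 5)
Your proposal is correct, and it reaches the paper's bound by a route that differs in its two key ingredients. Where you invoke the Poisson Poincar\'e inequality (with its sharp constant $\lambda=\alpha n^{\theta-1}$, tensorised over the $n$ coordinates) to reduce the $\btau$-variance to unit increments $\tau_i\mapsto\tau_i+1$, the paper uses the Efron--Stein inequality with full resampling of a single $\tau_j$, and must therefore handle jumps $\Delta_j=\tau_j'-\tau_j$ of arbitrary size: it does so by an interpolation in $s$ along $\btau+s\Delta_j\bu^j$ together with a GKS-based monotonicity argument showing the connected correlation $\langle\sigma_i\sigma_j\rangle-\langle\sigma_i\rangle\langle\sigma_j\rangle$ decreases in $\tau_j$, plus the bound $\mathbb{E}[\mathds{1}(\tau_j'>\tau_j)(\tau_j'-\tau_j)]\le\alpha n^{\theta-1}$. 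Your second ingredient is also different: instead of interpolating, you compute the single-insertion increment in closed form via $e^{h_1\sigma_i}=\cosh h_1(1+\sigma_i\tanh h_1)$, giving $\langle\sigma_j\rangle|_{\tau_i+1}-\langle\sigma_j\rangle = \tanh h_1\,(\langle\sigma_i\sigma_j\rangle-\langle\sigma_i\rangle\langle\sigma_j\rangle)/(1+\tanh h_1\langle\sigma_i\rangle)$, with GKS~I and~II controlling denominator and numerator; this is exact, avoids the covariance-monotonicity step entirely, and even yields the marginally sharper factor $\tanh h_1\le h_1$. Both arguments then coincide in the endgame: the trick $(D_i)^2\le D_i$ (valid since the increment lies in $[0,1]$ by GKS), recognition of $n^{-2}\sum_{i,j}(\langle\sigma_i\sigma_j\rangle-\langle\sigma_i\rangle\langle\sigma_j\rangle)$ as the thermal variance of $Q_1$, and Theorem~\ref{thm:thermalconc} with $k=1$ in its sharpened $1/n$ form, applied pointwise in $\btau$ after swapping the $h_0$-integral with $\mathbb{E}_{\btau}$ (Tonelli). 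What your route buys is a cleaner and shorter increment analysis; what the paper's route buys is self-containedness within the Efron--Stein toolbox already used elsewhere in the paper, at the cost of the interpolation and monotonicity lemma. The only point to make explicit if you write this up is the precise statement and tensorisation of the Poisson Poincar\'e inequality $\mathrm{Var}(f(X))\le\lambda\,\mathbb{E}[(f(X+1)-f(X))^2]$ for $X\sim\mathrm{Poi}(\lambda)$ (Klaassen, Bobkov--Ledoux), since the constant $\lambda$ is exactly what produces the $\alpha n^{\theta-1}$ prefactor.
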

\begin{proof}
Recall $\btau$ is a random vector with i.i.d. components $\tau_j\sim\mathrm{Poi}(\alpha n^{\theta-1})$ for $j \in \{1, \dots, n\}$. Let $\btau^j$ 
be the random vector that differs from $\btau$ {\it only} at the $j$--th component, which is replaced by a new $\tau'_j\sim\mathrm{Poi}(\alpha n^{\theta-1})$ drawn 
independently from everything else. For this proof we explicitly keep track of the $\btau$ dependence in Gibbs expectations $\<-\>_\btau$. The Efron--Stein inequality states ($\mathds{1}(\cdot)$ is the indicator function) 
\begin{align}
\mathbb{E}_{\btau} \big[ &( \< Q_k \>_{\btau} - \mathbb{E}_{\btau} \< Q_k \>_{\btau} )^2 \big ] 
	 \leq \frac{1}{2} \sum_{j=1}^{n} \mathbb{E}_{\btau \setminus \tau_j}\mathbb{E}_{\tau_j} \mathbb{E}_{\tau'_j} \big[ ( \< Q_k \>_{\btau^{j}} - \< Q_k \>_{\btau} )^2 \big ] 
	 \nonumber \\
	& = \frac{1}{2}\sum_{j=1}^{n} \mathbb{E}_{\btau\setminus \tau_j} \mathbb{E}_{\tau_j}  
	\mathbb{E}_{\tau'_j} \big [ \mathds{1}(\tau'_j > \tau_j) ( \< Q_k \>_{\btau^{j}} - \< Q_k \>_{\btau} )^2 \big ]
	+
	\frac{1}{2}\sum_{j=1}^{n} \mathbb{E}_{\btau\setminus \tau_j} \mathbb{E}_{\tau_j}  
	\mathbb{E}_{\tau'_j} \big [ \mathds{1}(\tau'_j =\tau_j) ( \< Q_k \>_{\btau^{j}} - \< Q_k \>_{\btau} )^2 \big ]
	\nonumber \\ &
	\qquad + \frac{1}{2}\sum_{j=1}^{n} \mathbb{E}_{\btau\setminus \tau_j} \mathbb{E}_{\tau_j}
	\mathbb{E}_{\tau'_j} \big [ \mathds{1}(\tau'_j < \tau_j)( \< Q_k \>_{\btau^{j}} - \< Q_k \>_{\btau} )^2 \big ]
	\nonumber \\ &
	=
	\sum_{j=1}^{n} \mathbb{E}_{\btau\setminus \tau_j} \mathbb{E}_{\tau_j}  
	\mathbb{E}_{\tau'_j} \big [ \mathds{1}(\tau'_j > \tau_j) ( \< Q_k \>_{\btau^{j}} - \< Q_k \>_{\btau} )^2 \big ]\,.
	\label{eq:concentration-Poisson:1}
\end{align}
To get the second equality we used that the term with $\tau'_j =\tau_j$ vanishes and that the terms with $\tau'_j > \tau_j$ and $\tau'_j < \tau_j$ are equal by symmetry
(under exchange of $\tau_j^\prime$ and $\tau_j$). 
The two GKS inequalities 
imply 
\begin{align*}
 \frac{d \< Q_k \>_{\btau}}{d \tau_j} & = \frac{k h_1}{n}\sum_{i=1}^n \langle \sigma_i\rangle^{k-1} (\langle \sigma_i \sigma_j\rangle - \langle \sigma_i\rangle\langle \sigma_j\rangle)
 \geq 0
\end{align*}
and therefore $\< Q_k \>_{\btau^{j}} - \< Q_k \>_{\btau} \geq 0$ when $\tau'_j > \tau_j$ 
(here note that $\tau_j$ is an integer but we formally consider it real when computing a derivative and then restrict the obtained monotonicity result to the integer case). 
This together with $0\leq \< Q_k \>_{\btau}  \leq 1$ (by GKS) implies $( \< Q_k \>_{\btau^{j}} - \mathbb{E}_{\btau} \< Q_k \>_{\btau} ) \in [0,1]$. 
Then \eqref{eq:concentration-Poisson:1} implies
\begin{align}
\mathbb{E}_{\btau} \big[ ( \< Q_k \>_{\btau} - \mathbb{E}_{\btau} \< Q_k \>_{\btau} )^2 \big ] 
	& \leq \sum_{j=1}^{n} \mathbb{E}_{\btau\setminus \tau_j} \mathbb{E}_{\tau_j}  
	\mathbb{E}_{\tau'_j} \big [ \mathds{1}(\tau'_j > \tau_j) (\< Q_k \>_{\btau^{j}} - \< Q_k \>_{\btau} )\big ]\,.
	\label{eq:concentration-Poisson:2}
\end{align}
Let $\Delta_j \equiv \tau'_j - \tau_j$ and $\bu^j = (0, \dots, 0, 1, 0, \dots, 0)$ with $u_j = 1$. This allows us to rewrite $\btau^j = \btau + \Delta_j \bu^j$.
An interpolation gives
\begin{align}
	\< Q_k \>_{\btau + \Delta_j \bu^j} - \< Q_k \>_{\btau} \-
	&= \int_0^1 ds \frac{d}{ds} \< Q_k \>_{\btau + s \Delta_j \bu^j} \nonumber \\
	& = \frac{k h_1 \Delta_j}{n} \int_0^1 ds \sum_{i=1}^{n} \< \sigma_i\>_{\btau + s \Delta_j \bu^j}^{k-1} 
	\big (  \< \sigma_i \sigma_j \>_{\btau + s \Delta_j \bu^j} -  \< \sigma_i\>_{\btau + s \Delta_j \bu^j}  \< \sigma_j\>_{\btau + s \Delta_j \bu^j} \big )\,.
	\label{eq:concentration-Poisson:3}
\end{align}
Under the condition $\tau'_j > \tau_j$ the integrand in 
\eqref{eq:concentration-Poisson:3} is non-negative by the two GKS inequalities. Also note that, again by the two GKS inequalities,
\begin{align*}
\frac{d}{d \tau_j} \big ( \< \sigma_i \sigma_j \>_{\btau} -  \< \sigma_i\>_{\btau}  \< \sigma_j\>_{\btau} \big )
	& = - 2 h_1  \< \sigma_j\>_{\btau} \big ( \< \sigma_i \sigma_j \>_{\btau} -  \< \sigma_i\>_{\btau}  \< \sigma_j\>_{\btau} \big ) \leq 0
\end{align*}
so that, as $\Delta_j > 0$,
\begin{align}\label{eq:concentration-Poisson:5}
 \< \sigma_i \sigma_j \>_{\btau + s \Delta_j \bu^j} -  \< \sigma_i\>_{\btau + s \Delta_j \bu^j}  \< \sigma_j\>_{\btau + s \Delta_j \bu^j}
 \leq
 \< \sigma_i \sigma_j \>_{\btau} -  \< \sigma_i\>_{\btau}  \< \sigma_j\>_{\btau}\,.
\end{align}
Thus substituting \eqref{eq:concentration-Poisson:3}, \eqref{eq:concentration-Poisson:5} into \eqref{eq:concentration-Poisson:2} and simply upper bounding $\< \sigma_i\>_{\btau + s \Delta_j \bu^j}^{k-1}$ by $1$, we obtain
\begin{align}
\mathbb{E}_{\btau} \big [ ( \< Q_k \>_{\btau} - \mathbb{E}_{\btau} \< Q_k \>_{\btau} )^2 \big] 
	& \leq \frac{k h_1}{n} \sum_{i,j=1}^{n} \mathbb{E}_{\btau} \mathbb{E}_{\tau'_j} \big[ \mathds{1} ( \tau'_j > \tau_j ) (\tau'_j - \tau_j) \big (  \< \sigma_i \sigma_j \>_{\btau} -  \< \sigma_i\>_{\btau}  \< \sigma_j\>_{\btau} \big ) \big ]\,.
	\label{eq:concentration-Poisson:6}
\end{align}
The part containing $\tau'_j$ has an upper bound independent of $j$:
\begin{align*}
\mathbb{E}_{\tau'_j} \big[ \mathds{1} ( \tau'_j > \tau_j ) (\tau'_j - \tau_j) \big] 
	\leq \mathbb{E}_{\tau'_j} [\tau'_j]
	= \frac{\alpha}{n^{1-\theta}}
\end{align*}
because $\tau_j\ge 0$ and $\tau_j'\sim \mathrm{Poi}(\alpha n^{\theta-1})$. This further relaxes \eqref{eq:concentration-Poisson:6} to
\begin{align}
\mathbb{E}_{\btau} \big [ ( \< Q_k \>_{\btau} - \mathbb{E}_{\btau} \< Q_k \>_{\btau} )^2 \big ] 
	& \leq \frac{\alpha k h_1}{n^{2-\theta}} \sum_{i,j=1}^{n} \mathbb{E}_{\btau} \big [ \< \sigma_i \sigma_j \>_{\btau} -  \< \sigma_i\>_{\btau}  \< \sigma_j\>_{\btau} \big ]
	= \alpha k h_1 n^{\theta} \mathbb{E}_{\btau}  \big\< ( Q_1 - \< Q_1 \>_{\btau} )^2 \big\>_{\btau}  \label{eq:concentration-Poisson:8}
\end{align}
(recall \eqref{derivativesfF} for the last equality). 
Finally, integrating \eqref{eq:concentration-Poisson:8} over $h_0 \in [\underline h, \bar h]$ and using Theorem~\ref{thm:thermalconc} with $k=1$ (the factor $2$ can be 
removed here because we assume both GKS inequalities) ends the proof.
\end{proof}
\subsection{Last type of fluctuations of overlaps}\label{sec:4.5}
%
In this section we tackle the last kind of fluctuations in the decomposition \eqref{eq:full-concentration-Qp:1}. Before proceeding let us say a few words about the strategy. 
The proof is decomposed in three steps (where the first two follow the ideas used in proving the Ghirlanda-Guerra identities for spin glasses \cite{ghirlanda1998general}). 
The first step shows that the $\alpha$-derivative of the pressure concentrates 
which will lead to Lemma \ref{lemma4.5} (recall $\alpha$ controls the mean of the Poisson quenched variables $\tau_i$). In the second step we derive an identity which links a ``generating series'' containing overlap covariances to the product of an overlap 
and the pressure $\alpha$-derivative fluctuations. Using the concentration result of step one we can
 show that this generating series concentrates, leading to Lemma \ref{lemma4.6}. 
 In the third step, from the concentration of this generating series we extract the concentration of {\it each} overlap covariance, leading to Lemma \ref{lemma4.8}. 
 In particular, this will imply the control of the third kind of fluctuations in \eqref{eq:full-concentration-Qp:1}. 
 The third step is non-trivial as the generating series has alternating signs. 
 Nevertheless, we overcome this problem using an induction argument over $k$ (the order of the overlap) thanks to the GKS inequalities and to 
 Lemma~\ref{lemma4.3} for the base case $k=1$. 
\subsubsection{Step 1: Concentration of the pressure $\alpha$-derivative}
Let $\hat{P}_n(\alpha) \equiv \mathbb{E}_{\btau}P_n$. Note that the pressure $p_n(\alpha) = \E \,\hat P_n$ is obtained by taking an expectation over the rest of the quenched variables. 
We start with a few preliminaries about these functions.
We emphasize the $\alpha$ dependence in this section. As before let $\bu^j = (0, \dots, 0, 1, 0, \dots, 0)$ with $u_j = 1$. Recall $\tau_i\sim \mathrm{Poi}(\alpha n^{\theta-1})$. Then a straightforward algebra using the 
Poisson property \eqref{eq:dPoissonMean} yields the following identities:
\begin{align}
\frac{d\hat{P}_n(\alpha)}{d\alpha} 
	& = \frac{1}{n^{2-\theta}} \sum_{i=1}^n \mathbb{E}_{\btau} \ln \< e^{h_1\sigma_i } \>_{\btau} \nonumber\\
	&= \frac{1}{n^{2-\theta}} \sum_{i=1}^n \mathbb{E}_{\btau} \ln ( 1 + \< \sigma_i \>_{\btau} \tanh h_1 ) + 
	\frac{1}{n^{1-\theta}} \ln\cosh h_1
	\,, \label{eq:Ft-der-alpha} \\
\frac{d^2 \hat{P}_n(\alpha)}{d\alpha^2} 
	& = \frac{1}{n^{3-2\theta}} \sum_{i, j=1}^{n} \Big( \mathbb{E}_{\btau} \ln \< e^{h_1\sigma_i } \>_{\btau+\bu^{j}} - \mathbb{E}_{\btau} \ln \< e^{h_1\sigma_i } \>_{\btau} \Big) \nonumber \\
	& = \frac{1}{n^{3-2\theta}} \sum_{i,j=1}^{n} \Big (  \mathbb{E}_{\btau} \ln ( 1 + \< \sigma_i \>_{\btau+\bu^j} \tanh h_1 ) -  \mathbb{E}_{\btau} \ln ( 1 + \< \sigma_i \>_{\btau} \tanh h_1 ) \Big)\,, \label{eq:Ft-der2-alpha}
\end{align}
where we used $e^{\sigma x} = \cosh x (1 +\sigma \tanh x)$ for $\sigma=\pm1$. The derivatives for $p_n(\alpha)$ can directly 
be obtained by taking an expectation over the rest of the quenched variables:
\begin{align}
\frac{dp_n(\alpha)}{d\alpha} 
	& =\frac{1}{n^{2-\theta}} \sum_{i=1}^n \mathbb{E} \ln ( 1 + \< \sigma_i \>_{\btau} \tanh h_1 ) + \frac{1}{n^{1-\theta}}\ln\cosh h_1
	\,, \label{eq:f-der-alpha} \\
\frac{d^2 p_n(\alpha)}{d\alpha^2} 
	& = \frac{1}{n^{3-2\theta}} \sum_{i,j=1}^n  \Big (\mathbb{E} \ln ( 1 + \< \sigma_i \>_{\btau+\bu^j} \tanh h_1 ) - \mathbb{E} \ln ( 1 + \< \sigma_i \>_{\btau} \tanh h_1 ) \Big )\,. \label{eq:f-der2-alpha}
\end{align}
The second GKS inequality \eqref{eq:GKS} implies that $d \< \sigma_i \>_{\btau}/d \tau_j = h_1(\< \sigma_i\sigma_j \>_{\btau}-\< \sigma_i \>_{\btau}\< \sigma_j \>_{\btau})$ is non-negative, and therefore $ \< \sigma_i \>_{\btau} \leq \< \sigma_i \>_{\btau+\bu^j}$. 
Thus the identities \eqref{eq:Ft-der2-alpha} and \eqref{eq:f-der2-alpha} imply (using also $1+\langle \sigma_i\rangle \tanh h_1 \ge 0$) that $d^2 \hat{P}_n(\alpha)/d\alpha^2 \geq 0$ and $d^2 p_n(\alpha)/d\alpha^2 \geq 0$, 
which means that $\hat{P}_n(\alpha)$ and $p_n(\alpha)$ are convex in $\alpha$ (note that in order to obtain this convexity we used only the second GKS inequality, no need of the first one here).
One can also see that 
\begin{align}
\Big|\frac{dp_n(\alpha)}{d\alpha} \Big|\leq \frac{h_1}{n^{1-\theta}}\label{68}	\,.
\end{align}

We can now show a concentration result for the $\alpha$-derivative of the pressure based on Lemma \ref{thm:bound-by-convexity}.
\\
\begin{lemma}[Concentration of the pressure $\alpha$-derivative]\label{lemma4.5}
Assume the Hamiltonian $\sH$ given by \eqref{Htot} satisfies $\< \sigma_i \sigma_j \> - \< \sigma_i \> \< \sigma_j \> \geq 0$ for all $i, j=1, \dots, n$ and assume also that 
Condition~\ref{conditionC} holds. Then for any $[\underline \alpha,\bar \alpha]\subset(0,1)$, $h_0\in(0,1)$, $h_1\in(0,1]$ and $\theta \in(1/2,1)$ we have
\begin{align*}
\int_{\underline \alpha}^{\bar \alpha} d\alpha \, \mathbb{E} \Big [ \Big ( \frac{d\hat{P}_n(\alpha)}{d \alpha} - \frac{dp_n(\alpha)}{d\alpha} \Big )^2 \Big ]
	& \leq \frac{5C_P+40h_1^2}{n^{(5-4\theta)/3}}\,. 
\end{align*}
\label{thm:concentration2a}
\end{lemma}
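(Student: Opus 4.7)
The plan is to mirror the argument used for Lemma \ref{lemma4.3}, but now in the variable $\alpha$ rather than $h_0$. All the ingredients are in place: both $\hat{P}_n(\alpha)=\mathbb{E}_{\btau}P_n$ and $p_n(\alpha)=\mathbb{E}\,P_n$ are convex in $\alpha$ (which was established from \eqref{eq:Ft-der2-alpha}--\eqref{eq:f-der2-alpha} using only the second GKS inequality, which is the hypothesis of the lemma), and the pressure satisfies the concentration bound \eqref{concpressuregeneral}. Condition~\ref{conditionC} guarantees this with a constant $C_P$ that can be taken independent of $h_0,h_1,\alpha$.

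The first step is to apply Lemma~\ref{thm:bound-by-convexity} pointwise at each $\alpha$ with $G=\hat{P}_n$ and $g=p_n$, yielding
\begin{align*}
\Big|\frac{d\hat P_n}{d\alpha}-\frac{dp_n}{d\alpha}\Big|
\leq \delta^{-1}\!\!\sum_{u\in\{\alpha-\delta,\alpha,\alpha+\delta\}}\!\!|\hat P_n(u)-p_n(u)|+C^+_\delta(\alpha)+C^-_\delta(\alpha),
\end{align*}
where $C^\pm_\delta(\alpha)$ are the one-sided increments of $p_n'$, both non-negative by convexity. Squaring and using $(\sum_{r=1}^5 u_r)^2\le 5\sum u_r^2$, then taking $\mathbb{E}$, produces five squared terms. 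For the three difference terms I use Jensen's inequality in the form $\mathbb{E}[(\hat P_n-p_n)^2]=\mathbb{E}[(\mathbb{E}_{\btau}(P_n-\mathbb{E} P_n))^2]\le\mathbb{E}[(P_n-p_n)^2]\le C_P/n$, so their contribution after integrating $\alpha$ over $[\underline\alpha,\bar\alpha]\subset(0,1)$ is of order $C_P/(n\delta^2)$.

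The second step deals with the $C^\pm_\delta(\alpha)^2$ terms. Using the universal bound \eqref{68}, namely $|dp_n/d\alpha|\le h_1/n^{1-\theta}$, I get the crude pointwise bound $C^\pm_\delta(\alpha)\le 2h_1/n^{1-\theta}$. Combining this with $C^\pm_\delta\ge 0$ gives $(C^\pm_\delta)^2\le (2h_1/n^{1-\theta})C^\pm_\delta$, and then the integral telescopes by the fundamental theorem of calculus:
\begin{align*}
\int_{\underline\alpha}^{\bar\alpha} \!(C^+_\delta+C^-_\delta)\,d\alpha
=\big[p_n(\bar\alpha+\delta)-p_n(\bar\alpha-\delta)\big]+\big[p_n(\underline\alpha-\delta)-p_n(\underline\alpha+\delta)\big]
\le \frac{4\delta h_1}{n^{1-\theta}},
\end{align*}
again by the mean value theorem together with \eqref{68}. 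Hence the $C^\pm$ contribution is at most of order $\delta h_1^2/n^{2-2\theta}$.

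The final step is optimisation. Putting the two pieces together gives a bound of the form $A\,C_P/(n\delta^2)+B\,\delta h_1^2/n^{2-2\theta}$ for explicit constants $A,B$. Balancing the two terms suggests $\delta=n^{(1-2\theta)/3}$, which makes both contributions of order $n^{-(5-4\theta)/3}$ and yields exactly the scaling stated in the lemma. Tracking the numerical constants through with the factor of $5$ coming from the squared-sum inequality delivers the claimed coefficient $5C_P+40h_1^2$. The only mildly delicate point is the telescoping bound on $\int (C^+_\delta+C^-_\delta)\,d\alpha$, since a naive $\sup$-bound would lose a factor of $1/\delta$ and destroy the scaling; but as in the proof of Lemma~\ref{lemma4.3}, non-negativity of $C^\pm_\delta$ together with the $L^\infty$-bound \eqref{68} resolves this cleanly.
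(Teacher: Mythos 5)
Your proposal is correct and follows essentially the same route as the paper's proof: apply the convexity lemma in $\alpha$, bound $\mathbb{E}[(\hat P_n-p_n)^2]$ by $\mathbb{E}[(P_n-p_n)^2]\le C_P/n$, handle the $C^\pm_\delta$ terms via the $L^\infty$ bound \eqref{68} together with the telescoping integral, and balance with $\delta=n^{(1-2\theta)/3}$. The only cosmetic difference is that you invoke Jensen where the paper uses the variance decomposition $\mathbb{E}[(\hat P_n-p_n)^2]=\mathbb{E}[(P_n-p_n)^2]-\mathbb{E}[(P_n-\hat P_n)^2]$, which yields the same bound.
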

\begin{proof}
The proof is similar to the one of Lemma \ref{lemma4.3}. Using Lemma~\ref{thm:bound-by-convexity}, the convexity of $\hat{P}_n$ and $p_n$ in $\alpha$ implies that for any $\delta > 0$ we have
\begin{align*}
\Big| \frac{d\hat{P}_n(\alpha)}{d\alpha} - \frac{dp_n(\alpha)}{d\alpha} \Big |
	& \leq  \delta^{-1} \sum_{\substack{u \in \{ \alpha - \delta, \alpha, \alpha + \delta \}}} | \hat{P}_n(u) - p_n(u) |  + C_\delta^{+}(\alpha) + C_\delta^{-}(\alpha)	
\end{align*}
(note we can take $\delta$ small enough so that $\alpha -\delta>0$) where
\begin{align*}
C_\delta^{+}(\alpha) \equiv \frac{dp_n(\alpha + \delta)}{d\alpha} - \frac{dp_n(\alpha)}{d\alpha} \geq 0\,,\qquad C_\delta^{-}(\alpha) \equiv\frac{dp_n(\alpha)}{d\alpha} - \frac{dp_n(\alpha - \delta)}{d\alpha} \geq 0\,.
\end{align*}
Squaring both sides, applying $( \sum_{r=1}^{k} u_r )^2 \leq k \sum_{r=1}^{k} u_r^2$ and averaging we get
\begin{align}
\mathbb{E} \Big [ \Big ( \frac{d\hat{P}_n(\alpha)}{d\alpha} - \frac{dp_n(\alpha)}{d\alpha} \Big )^2 \Big ]
	& \leq 5 \delta^{-2} \sum_{\substack{u \in \{ \alpha - \delta, \alpha, \alpha + \delta \}}} \mathbb{E} [ ( \hat{P}_n(u) - p_n(u)  )^2  ] 
	+ 5C_\delta^{+}(\alpha)^2 + 5C_\delta^{-}(\alpha)^2\,.
	\label{eq:concentration2a:bd1}
\end{align}
It is easy to check that 
\begin{align*}
 \mathbb{E}[(\hat{P}_n(\alpha)-p_n(\alpha))^2] = \mathbb{E}[(P_n(\alpha) - p_n(\alpha))^2] - \mathbb{E}[(P_n(\alpha) - \hat{P}_n(\alpha))^2] \leq \mathbb{E}[(P_n(\alpha) - p_n(\alpha))^2]\,.
\end{align*}
Thus under the concentration assumption \eqref{concpressuregeneral} for the pressure, the first term in the r.h.s. of \eqref{eq:concentration2a:bd1} is smaller than $5C_P/(n\delta^2)$. 
Next, we recall \eqref{68} which implies the crude bound $C_\delta^{\pm}(h_0)\le 2h_1/ n^{1-\theta}$, so using $C_\delta^{\pm}(h_0) \geq 0$,
\begin{align*}
\int_{\underline \alpha}^{\bar \alpha} d\alpha \big ( C_\delta^{+}(\alpha)^2 + C_\delta^{-}(\alpha)^2 \big )
	& \leq \frac{2h_1}{n^{1-\theta}} \int_{\underline \alpha}^{\bar \alpha} d\alpha \big ( C_\delta^{+}(\alpha) + C_\delta^{-}(\alpha) \big ) \\
	& = \frac{2h_1}{n^{1-\theta}} \big[ \big ( p_n(\bar \alpha+\delta) - p_n(\bar \alpha-\delta) \big ) + \big ( p_n(\underline{\alpha}-\delta) - p_n(\underline{\alpha}+\delta) \big )  \big ] \leq \frac{8 \delta h_1^2}{n^{2- 2\theta}}
\end{align*}
where we used the mean value theorem for the last inequality. Thus when \eqref{eq:concentration2a:bd1} is integrated over $\alpha$ we obtain
\begin{align}\label{lastbounddd}
\int_{\underline \alpha}^{\bar \alpha} d\alpha \, \mathbb{E} \Big [ \Big ( \frac{d\hat{P}_n(\alpha)}{d\alpha} - \frac{dp_n(\alpha)}{d\alpha} \Big )^2 \Big ]
	& \leq \frac{5C_P}{n \delta^2} + \frac{40 \delta h_1^2}{n^{2- 2\theta}}\,.
\end{align}
The proof is ended by choosing $\delta$ such that $n^{-1}\delta^{-2} = \delta n^{-2+2\theta}$, in other words $\delta= n^{(1-2\theta)/3}$, which is possible for $\theta > 1/2$ (because we must have 
$\delta$ small enough in \eqref{lastbounddd}). 
With this choice the upper bound in \eqref{lastbounddd} becomes $(5C_P + 40 h_1^2)/n^{(5-4\theta)/3}$. Note that $5-4\theta >0$ because $\theta <1$ anyway. 
\end{proof}
\subsubsection{Step 2: Linking the fluctuations of the pressure $\alpha$-derivative to a series of overlap covariances}
In this step $K\geq 1$ is an integer fixed throughout. Define the set of multi-overlap covariances 
(w.r.t. the quenched variables except the Poisson ones $\btau$) as
\begin{align}
{\rm Cov}_{{K}, k} \equiv \mathbb{E} [ \mathbb{E}_{\btau} \< Q_{{K}} \> \, \mathbb{E}_{\btau} \< Q_k \> ] - \mathbb{E} \< Q_{K} \>  \,\mathbb{E} \< Q_k \>\,, \qquad k\ge 1\,.\label{cov}
\end{align}
The task is to bound the variance of 
$\mathbb{E}_{\btau} \< Q_k \>$ using Lemma \ref{lemma4.5}. However, here 
is a case where constructing a bound for the covariances is more flexible and feasible. 
Roughly speaking, we will show in this step that a {\it generating series} for the set $\{{\rm Cov}_{K, k}, k\geq 1\}$ is small. 
From this knowledge, and despite this series has alternating signs, we will in step 3 deduce that all individual covariances ${\rm Cov}_{K, k}$ 
are also small. In particular this will hold for the variance term $k=K$. 

\begin{lemma}[Concentration of a generating series]\label{lemma4.6}
Assume the Hamiltonian $\sH$ given by \eqref{Htot} satisfies $\< \sigma_i \sigma_j \> - \< \sigma_i \> \< \sigma_j \> \geq 0$ for all $i, j=1, \dots, n$ and assume also that 
Condition~\ref{conditionC} holds. Then for any $[\underline \alpha,\bar \alpha]\subset(0,1)$, $h_0\in(0,1)$, $h_1\in(0,1]$, $\theta \in(1/2,1)$ and any fixed integer $K\geq 1$ we have
\begin{align}
\int_{\underline{\alpha}}^{\bar \alpha} d\alpha \, \Big | \sum_{k=1}^{\infty} \frac{(-1)^{k+1}}{k} (\tanh h_1)^k\, {\rm Cov}_{K, k} \Big |
\leq 
\frac{\sqrt{5C_P+40h_1^2}}{n^{(\theta-1/2)/3}}\,.
\label{eq:concentration-Qp:sum}
\end{align}
\label{thm:concentration-Qp:sum}
\end{lemma}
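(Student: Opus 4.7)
The key identity to exploit is the Taylor expansion $\ln(1+x)=\sum_{k\geq 1}(-1)^{k+1}x^k/k$, which converges absolutely and uniformly on $|x|\le \tanh h_1<1$ since $h_1\in(0,1]$. Applied to the closed-form expressions \eqref{eq:Ft-der-alpha} and \eqref{eq:f-der-alpha} with $x=\<\sigma_i\>_{\btau}\tanh h_1$, and using that $\frac{1}{n}\sum_{i=1}^n \<\sigma_i\>_{\btau}^k = \<Q_k\>_{\btau}$, I would obtain
\begin{align*}
\frac{d\hat P_n(\alpha)}{d\alpha}-\frac{dp_n(\alpha)}{d\alpha}
=\frac{1}{n^{1-\theta}}\sum_{k=1}^{\infty}\frac{(-1)^{k+1}}{k}(\tanh h_1)^k\,\big(\mathbb{E}_{\btau}\<Q_k\>-\mathbb{E}\<Q_k\>\big).
\end{align*}
This is the crucial identity: it turns the fluctuation of the pressure derivative (which we just controlled in Lemma~\ref{lemma4.5}) into the generating series we want.

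Next I would multiply both sides by $\mathbb{E}_{\btau}\<Q_K\>$ and take $\mathbb{E}$. Since $\mathbb{E}[\mathbb{E}_{\btau}\<Q_k\>]=\mathbb{E}\<Q_k\>$, the right-hand side yields exactly the covariances defined in \eqref{cov}:
\begin{align*}
\mathbb{E}\Big[\mathbb{E}_{\btau}\<Q_K\>\Big(\frac{d\hat P_n(\alpha)}{d\alpha}-\frac{dp_n(\alpha)}{d\alpha}\Big)\Big]
=\frac{1}{n^{1-\theta}}\sum_{k=1}^{\infty}\frac{(-1)^{k+1}}{k}(\tanh h_1)^k\,{\rm Cov}_{K,k}.
\end{align*}
Exchanging $\mathbb{E}$ with the infinite sum is legitimate by dominated convergence because the summand is bounded in absolute value by $(\tanh h_1)^k/k$ uniformly in the disorder.

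From here the bound follows by two applications of the Cauchy--Schwarz inequality. First, for each fixed $\alpha$, using $|\mathbb{E}_{\btau}\<Q_K\>|\le 1$,
\begin{align*}
\Big|\mathbb{E}\Big[\mathbb{E}_{\btau}\<Q_K\>\Big(\frac{d\hat P_n}{d\alpha}-\frac{dp_n}{d\alpha}\Big)\Big]\Big|
\le \sqrt{\mathbb{E}\Big[\Big(\frac{d\hat P_n}{d\alpha}-\frac{dp_n}{d\alpha}\Big)^2\Big]}.
\end{align*}
Then multiplying by $n^{1-\theta}$, integrating over $\alpha\in[\underline\alpha,\bar\alpha]\subset(0,1)$, and applying Cauchy--Schwarz to the $d\alpha$ integral (using $\bar\alpha-\underline\alpha\le 1$) gives
\begin{align*}
\int_{\underline\alpha}^{\bar\alpha}d\alpha\,\Big|\sum_{k=1}^{\infty}\frac{(-1)^{k+1}}{k}(\tanh h_1)^k\,{\rm Cov}_{K,k}\Big|
\le n^{1-\theta}\sqrt{\int_{\underline\alpha}^{\bar\alpha}d\alpha\,\mathbb{E}\Big[\Big(\tfrac{d\hat P_n}{d\alpha}-\tfrac{dp_n}{d\alpha}\Big)^2\Big]}.
\end{align*}
Substituting the bound from Lemma~\ref{lemma4.5}, the exponent becomes $n^{1-\theta}\cdot n^{-(5-4\theta)/6}=n^{-(2\theta-1)/6}=n^{-(\theta-1/2)/3}$, which is exactly the target rate.

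I do not expect any real obstacle here: the main subtleties are merely (i) justifying termwise manipulation of the infinite series (absolute convergence since $\tanh h_1<1$) and (ii) checking the arithmetic of the exponents. The real content has been packaged into Lemma~\ref{lemma4.5}; this step is just the algebraic bridge between the derivative concentration and a weighted sum of covariances, with Cauchy--Schwarz converting that into the stated bound. The harder work comes afterwards (Step 3), where the alternating signs in the generating series must be disentangled via induction to extract each individual ${\rm Cov}_{K,k}$.
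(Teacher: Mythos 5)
Your proposal is correct and follows essentially the same route as the paper: Taylor-expand the $\alpha$-derivatives \eqref{eq:Ft-der-alpha} and \eqref{eq:f-der-alpha}, correlate the fluctuation of the derivative with $\mathbb{E}_{\btau}\langle Q_K\rangle$ to generate the covariances \eqref{cov}, and conclude by Cauchy--Schwarz together with Lemma~\ref{lemma4.5}. The only (immaterial) difference is the order of operations — the paper applies Cauchy--Schwarz over the joint measure $d\alpha\times\mathbb{E}$ before inserting the series expansion, while you expand first and bound $|\mathbb{E}_{\btau}\langle Q_K\rangle|\le 1$ pointwise in $\alpha$ — and the exponent bookkeeping matches.
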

\begin{proof}
By the Cauchy-Schwarz inequality and Lemma~\ref{thm:concentration2a} we have
\begin{align}
\int_{\underline \alpha}^{\bar \alpha}d\alpha \, & \Big | \mathbb{E} \Big [ \mathbb{E}_{\btau} \< Q_{K} \>  \Big ( \frac{d\hat{P}_n(\alpha)}{d\alpha} - \frac{dp_n(\alpha)}{d\alpha} \Big ) \Big ] \Big | 
\nonumber \\
	& \leq  \Big \{ \int_{\underline \alpha}^{\bar \alpha}d\alpha \, \mathbb{E} \big [ \big ( \mathbb{E}_{\btau} \< Q_{K} \> \big )^2 \big ] \Big \}^{1/2} \Big \{ \int_{\underline \alpha}^{\bar \alpha}d\alpha \, \mathbb{E} \Big [ \Big ( \frac{d\hat{P}_n(\alpha)}{d\alpha} - \frac{dp_n(\alpha)}{d\alpha} \Big )^2 \Big ] \Big \}^{1/2} \nonumber \\
	& \leq \frac{\sqrt{5C_P+40h_1^2}}{n^{(5-4\theta)/6}}\,. \label{eq:concentration2b}
\end{align}
The next step is to expand the $\alpha$-derivatives of the pressure. For that we recall the formulas \eqref{eq:Ft-der-alpha} and \eqref{eq:f-der-alpha}. Taylor 
expanding the logarithms in \eqref{eq:Ft-der-alpha} and recalling $n^{-1}\sum_{i=1}^n\< \sigma_i \>^k = \< Q_k \>$ gives
\begin{align*}
\frac{d\hat{P}_n(\alpha)}{d\alpha} 
	& = \frac{1}{n^{1-\theta}} \sum_{k=1}^{\infty} \frac{(-1)^{k+1}}{k} (\tanh h)^k \,\mathbb{E}_{\btau}  \< Q_k \> + \frac{1}{n^{1-\theta}} \ln \cosh h_1\,. 
\end{align*}
The series expansion of $\frac{dp_n}{d\alpha}$ is obtained similarly based on \eqref{eq:f-der-alpha}, and is thus the same with $\E_\btau$ replaced by 
the full expectation $\E$. Substituting these expansions in the left-most hand side of \eqref{eq:concentration2b} yields 
\begin{align*}
\int_{\underline \alpha}^{\bar \alpha}d\alpha \, \Big | \sum_{k=1}^{\infty} \frac{(-1)^{k+1}}{k} (\tanh h_1)^k \big\{  \mathbb{E} [ \mathbb{E}_{\btau} \< Q_{K} \>\, \mathbb{E}_{\btau} \< Q_k \> ] - \mathbb{E} \< Q_{K}\> \, \mathbb{E}  \< Q_k \>\big\} \Big |
	& \leq \frac{\sqrt{5C_P+40h_1^2}}{n^{(5-4\theta)/6}}n^{1-\theta}\,. 
\end{align*}
Recognizing \eqref{cov} then ends the proof.
\end{proof}
%
\subsubsection{Step 3: Induction argument over the overlap covariances}
We start with a useful monotonicity lemma that will allow us to control the alternating signs of the generating series in Lemma
\ref{lemma4.6}.
\begin{lemma}[A monotonicity property]
Assume the Hamiltonian $\sH$ given by \eqref{Htot} is fully ferromagnetic and thus satisfies both GKS inequalities. Then we have
\begin{align*}
\frac{1}{k}{\rm Cov}_{K,k} - \frac{\tanh h_1}{k+1} {\rm Cov}_{K,k+1} \geq 0\,.
\end{align*}
\label{thm:Harris:application}
\end{lemma}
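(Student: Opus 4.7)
The plan is to recognize the left-hand side as a covariance (with respect to the random couplings $\{J_X\}$) of two quantities that are both non-decreasing functions of each $J_X$, and then to invoke the Harris (FKG for independent variables) inequality, with the sign conditions provided by the two GKS inequalities.

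First, I would use the tower property $\mathbb{E}[\cdot] = \mathbb{E}_J \mathbb{E}_\btau[\cdot]$ to rewrite each covariance. Since $\mathbb{E}_\btau \langle Q_k \rangle$ is a function of the couplings $J$ only (the $\btau$ have been integrated out), and since $\mathbb{E}\langle Q_k\rangle = \mathbb{E}_J \mathbb{E}_\btau \langle Q_k\rangle$, one has ${\rm Cov}_{K,k} = {\rm Cov}_J(\mathbb{E}_\btau\langle Q_K\rangle,\mathbb{E}_\btau\langle Q_k\rangle)$. By linearity of covariance, the quantity to bound becomes
\begin{align*}
\frac{1}{k}{\rm Cov}_{K,k} - \frac{\tanh h_1}{k+1}{\rm Cov}_{K,k+1} = {\rm Cov}_J(A, B),
\end{align*}
where $A \equiv \mathbb{E}_\btau\langle Q_K\rangle$ and $B \equiv \mathbb{E}_\btau\bigl[\tfrac{1}{k}\langle Q_k\rangle - \tfrac{\tanh h_1}{k+1}\langle Q_{k+1}\rangle\bigr]$.

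Next, I would establish monotonicity in each coupling $J_X$. By the second GKS inequality, $\partial\langle\sigma_i\rangle/\partial J_X = \langle\sigma_i\sigma_X\rangle - \langle\sigma_i\rangle\langle\sigma_X\rangle\ge 0$, so $\langle\sigma_i\rangle$ is non-decreasing in $J_X$, and by the first GKS inequality $\langle\sigma_i\rangle\in[0,1]$. It follows that $\langle Q_K\rangle = \tfrac{1}{n}\sum_i\langle\sigma_i\rangle^K$ is non-decreasing in $J_X$, hence so is $A$. For $B$, write
\begin{align*}
\frac{1}{k}\langle Q_k\rangle - \frac{\tanh h_1}{k+1}\langle Q_{k+1}\rangle = \frac{1}{n}\sum_{i=1}^n f_k(\langle\sigma_i\rangle), \qquad f_k(m) \equiv \frac{m^k}{k} - \frac{\tanh h_1}{k+1}\,m^{k+1}.
\end{align*}
Then $f_k'(m) = m^{k-1}(1 - m\tanh h_1) \ge 0$ on $[0,1]$ since $h_1\in[0,1]$ implies $\tanh h_1\le 1$. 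So $f_k$ is non-decreasing on $[0,1]$ and, composed with $\langle\sigma_i\rangle$, it is non-decreasing in each $J_X$; averaging over $\btau$ preserves this property, giving the monotonicity of $B$ in each $J_X$.

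Finally, since the $\{J_X\}_X$ are independent random variables and $A$, $B$ are both non-decreasing functions of this family, the Harris inequality yields ${\rm Cov}_J(A,B)\ge 0$, which is the desired inequality. The only non-routine step is the observation that the pairing of the consecutive overlaps with the specific prefactors $1/k$ and $\tanh h_1/(k+1)$ produces the polynomial $f_k$ whose derivative factors as $m^{k-1}(1-m\tanh h_1)$; this is what makes the positivity of the derivative (and hence the applicability of Harris) hinge precisely on the bound $h_1\le 1$ used throughout the paper.
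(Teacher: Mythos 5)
Your proof is correct and follows essentially the same route as the paper: the paper also identifies the left-hand side as $\mathbb{E}[g_K\tilde g_k]-\mathbb{E}\,g_K\,\mathbb{E}\,\tilde g_k$ with $g_K=\mathbb{E}_\btau\langle Q_K\rangle$ and $\tilde g_k=\frac1k\mathbb{E}_\btau\langle Q_k\rangle-\frac{\tanh h_1}{k+1}\mathbb{E}_\btau\langle Q_{k+1}\rangle$, checks via GKS that both are non-decreasing in each $J_X$, and applies the multivariate Harris inequality. Your factorization through $f_k(m)=m^k/k-\tanh h_1\, m^{k+1}/(k+1)$ with $f_k'(m)=m^{k-1}(1-m\tanh h_1)$ is just the chain-rule form of the paper's direct computation of $d\tilde g_k/dJ_X$, so the two arguments coincide.
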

\begin{proof}
Let $\bm{J}\equiv(J_X, X\subset\{1,\dots,n\})$. Define $g_{K}(\bm{J}) \equiv \mathbb{E}_{\btau} \<Q_{K} \>$ and $\tilde{g}_{k}(\bm{J}) \equiv \frac{1}{k} \mathbb{E}_{\btau} \<Q_{k}\> - \frac{\tanh h_1}{k+1} \mathbb{E}_{\btau} \< Q_{k+1} \>$. One can then recognize
\begin{align}
\frac{1}{k}{\rm Cov}_{K,k} - \frac{\tanh h_1}{k+1} {\rm Cov}_{K,k+1}
	& = \mathbb{E}[g_{K}(\bm{J}) \,\tilde{g}_{k}(\bm{J})] - \mathbb{E}\,g_{K}(\bm{J})\, \mathbb{E}\,\tilde{g}_{k}(\bm{J}) \label{eq:Harris:application:identify}
\end{align}
so it is enough to verify that $g_{K}(\bm{J})$ and $\tilde g_{k}(\bm{J})$ are positively correlated. Note that the expectations in \eqref{eq:Harris:application:identify} only carry over the set of
i.i.d. random coupling constants $J_X$, $X\subset\{1,\ldots,n\}$.
By the GKS inequalities the following derivatives are non-negative:
\begin{align*}
\frac{d}{dJ_X} g_{K}(\bm{J}) & = \frac{K}{n} \sum_{i=1}^{n} \mathbb{E}_{\btau} \big [ \< \sigma_i \>^{K-1} \big ( \< \sigma_i \sigma_X \> - \< \sigma_i \> \< \sigma_X \> \big ) \big ] \geq 0\,, \\
\frac{d}{dJ_X} \tilde{g}_{k}(\bm{J}) & = \frac{1}{n} \sum_{i=1}^{n} \mathbb{E}_{\btau} \big [ \< \sigma_i \>^{k-1} \big ( 1 - \< \sigma_i \> \tanh h_1 \big ) \big ( \< \sigma_i \sigma_X \> 
- \< \sigma_i \> \< \sigma_X \> \big ) \big ] \geq 0\,.
\end{align*}
In other words, $g_{K}(\bm{J})$ and $\tilde{g}_{k}(\bm{J})$ have same monotonicity w.r.t. each $J_X$ for all $X\subset\{1,\ldots,n\}$. 
We can then apply the Harris inequality (reproduced in Lemma~\ref{thm:Harris}, Appendix \ref{appendix-harris}) to finish the proof.
\end{proof}
Now we have all the necessary ingredients in order to inductively extract the concentration of each individual overlap from Lemma \ref{lemma4.6}.

\begin{lemma}[Concentration of the overlaps w.r.t. the quenched variables] \label{lemma4.8}
Assume the Hamiltonian $\sH$ given by \eqref{Htot} is fully ferromagnetic so that it satisfies both GKS inequalities, and also that Condition~\ref{conditionC} holds. 
Then for any $[\underline{h}, \bar h]\subset(0,1)$, $[\underline{\alpha},\bar\alpha]\subset(0,1)$, $\theta\in (1/2, 1)$, $h_1\in(0,1]$ and 
 any $k,K\ge 1$ we have
\begin{align*}
\int_{\underline h}^{\bar h} dh_0 \int_{\underline \alpha}^{\bar \alpha}d\alpha \, | {\rm Cov}_{K, k} | 
\leq \frac{kM_k}{(\tanh h_1)^k}\frac{\sqrt{5C_P+40}}{n^{(\theta-1/2)/3}}\,,
\end{align*}
where $M_k$ is defined by $M_1=1$, $M_{2k}=M_{2k-1}+1$, $M_{2k+1}=M_{2k}+2$ (so $M_k< 3k/2$). In particular for $k =K$,
\begin{align*}
\int_{\underline h}^{\bar h} dh_0 \int_{\underline \alpha}^{\bar \alpha} d\alpha\, \mathbb{E} \big [ \big ( \mathbb{E}_{\btau} \< Q_{k} \> 
- \mathbb{E}  \< Q_k \> \big )^2 \big ] \leq \frac{3k^2}{2(\tanh h_1)^k}\frac{\sqrt{5C_P+40}}{n^{(\theta-1/2)/3}}\,.
\end{align*}
\label{thm:Qp-induction}
\end{lemma}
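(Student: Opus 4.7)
The plan is induction on $k$, with Lemma~\ref{lemma4.3} as the base case and Lemmas~\ref{lemma4.6} and~\ref{thm:Harris:application} (together with the two GKS inequalities) powering the induction step. Throughout I set $b_k\defeq \frac{(\tanh h_1)^k}{k}{\rm Cov}_{K,k}$, $S\defeq \sum_{k\geq 1}(-1)^{k+1}b_k$, and $\epsilon_n\defeq \sqrt{5C_P+40}/n^{(\theta-1/2)/3}$, so that the target inequality becomes $\int b_k\leq M_k\epsilon_n$.

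For the base case $k=1$, I would view ${\rm Cov}_{K,1}$ as the covariance, over the non-Poisson quenched disorder $\bm{J}$, of $\mathbb{E}_{\btau}\<Q_K\>$ and $\mathbb{E}_{\btau}\<Q_1\>$. Cauchy--Schwarz, the bound $|\<Q_K\>|\leq 1$ from GKS, and the conditional-variance inequality $\mathbb{E}[(\mathbb{E}_{\btau}\<Q_1\>-\mathbb{E}\<Q_1\>)^2]\leq \mathbb{E}[(\<Q_1\>-\mathbb{E}\<Q_1\>)^2]$ together give $|{\rm Cov}_{K,1}|\leq \sqrt{\mathbb{E}[(\<Q_1\>-\mathbb{E}\<Q_1\>)^2]}$. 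Another Cauchy--Schwarz in the $(h_0,\alpha)$-integral combined with Lemma~\ref{lemma4.3} (whose bound is uniform in $\alpha$) then yields $\int dh_0\,d\alpha\,|{\rm Cov}_{K,1}|\leq \sqrt{5C_P+40}/n^{1/6}\leq \epsilon_n/\tanh h_1$; the last step uses $\tanh h_1\leq 1\leq n^{(1-\theta)/3}$ and $(1-\theta)/3 = 1/6-(\theta-1/2)/3$. This settles the case $M_1=1$.

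Next, I would collect the structural facts. Harris applied as in Lemma~\ref{thm:Harris:application} (both $\mathbb{E}_{\btau}\<Q_K\>$ and $\mathbb{E}_{\btau}\<Q_k\>$ are non-decreasing in every $J_X$ by the two GKS inequalities) gives ${\rm Cov}_{K,k}\geq 0$, hence $b_k\geq 0$; Lemma~\ref{thm:Harris:application} multiplied by $\tanh h_1$ yields $b_{k+1}\leq b_k$; Lemma~\ref{lemma4.6} together with $h_1\leq 1$ gives $\int dh_0\,d\alpha\,|S|\leq \epsilon_n$. Since $(b_k)$ is non-negative and non-increasing, $S$ is a classical positive-decreasing alternating series: every partial sum satisfies $S_j\in[0,b_1]$ and $S\geq 0$ pointwise. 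The central identity is
\begin{align*}
b_k\;=\;U_k+U_{k+1},\qquad U_k\defeq|S-S_{k-1}|\;=\;b_k-b_{k+1}+b_{k+2}-\cdots\;\geq 0,
\end{align*}
which holds because $S_k-S_{k-1}=(-1)^{k+1}b_k$ while $S$ lies between $S_{k-1}$ and $S_k$.

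I would then bound $\int U_k\leq \epsilon_n$ uniformly in $k$ by parity. For $k$ odd, $S_{k-1}=\sum_{j=1}^{(k-1)/2}(b_{2j-1}-b_{2j})\geq 0$ by monotonicity, so $U_k=S-S_{k-1}\leq S$ and $\int U_k\leq \int|S|\leq \epsilon_n$. For $k$ even, $U_k=S_{k-1}-S\leq S_{k-1}\leq b_1$ (from the alternating-series bound $S_j\leq b_1$), so the base case gives $\int U_k\leq \int b_1\leq \epsilon_n$. Adding, $\int b_k\leq 2\epsilon_n$ for all $k\geq 2$, i.e.\ $\int|{\rm Cov}_{K,k}|\leq \frac{2k}{(\tanh h_1)^k}\epsilon_n$, which is consistent with (indeed tighter than) the stated bound since $M_k\geq 2$ for $k\geq 2$. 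The main obstacle is that $S$ has alternating signs, so Lemma~\ref{lemma4.6} alone cannot rule out large cancelling $b_j$'s; this is overcome by the simultaneous GKS-based positivity of each ${\rm Cov}_{K,k}$ (via Harris) and the monotonicity $b_{k+1}\leq b_k$ (via Lemma~\ref{thm:Harris:application}), which together confine all partial sums to $[0,b_1]$ and let the single-order magnetization bound of Lemma~\ref{lemma4.3} propagate to every order $k$.
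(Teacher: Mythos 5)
Your proposal is correct, and it shares the paper's three pillars (the base case $k=1$ via Lemma~\ref{lemma4.3}, the generating-series bound of Lemma~\ref{lemma4.6}, and the GKS-based monotonicity of Lemma~\ref{thm:Harris:application}), but the way you extract the individual covariances from the alternating series is genuinely different from the paper's. The paper runs a two-step parity induction (from $2k-1$ to $2k$, then $2k$ to $2k+1$), at each step isolating the new term by the triangle inequality and accumulating the constants $M_k$, which is what produces the prefactor $kM_k<3k^2/2$. You instead add one extra, easy application of Harris: since both $\mathbb{E}_{\btau}\langle Q_K\rangle$ and $\mathbb{E}_{\btau}\langle Q_k\rangle$ are non-decreasing in every $J_X$ by the two GKS inequalities, ${\rm Cov}_{K,k}\geq 0$ for every $k$ --- a positivity the paper never invokes. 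Combined with the monotonicity $b_{k+1}\leq b_k$ from Lemma~\ref{thm:Harris:application}, this makes $S$ a bona fide non-negative, non-increasing alternating series, so $S$ is sandwiched between consecutive partial sums, giving the identity $U_k+U_{k+1}=b_k$ and your parity-wise bounds $\int U_k\leq\int|S|$ (odd $k$) and $\int U_k\leq\int b_1$ (even $k$). The result is non-inductive, uniform in $k$ at the level of $b_k$, and slightly sharper than the lemma (prefactor $2k$ instead of $kM_k$), hence it implies both displayed inequalities; your base case coincides with the paper's (Fubini, Cauchy--Schwarz, Lemma~\ref{lemma4.3}, and $\tanh h_1\leq 1$, $\theta<1$). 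Two points you leave implicit and should state for completeness: the series converges and the tail manipulations are legitimate because $0\leq b_k\leq(\tanh h_1)^k$ (using $0\leq\langle Q_k\rangle\leq 1$ by the first GKS inequality and $\tanh h_1<1$), and the second displayed claim for $k=K$ follows immediately by recognizing from \eqref{cov} that ${\rm Cov}_{K,K}=\mathbb{E}\big[\big(\mathbb{E}_{\btau}\langle Q_K\rangle-\mathbb{E}\langle Q_K\rangle\big)^2\big]$ and using $M_K<3K/2$ (or your tighter constant).
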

\begin{proof}
We start the induction with the base case $k=1$. From \eqref{cov} we note that
\begin{align*}
 {\rm Cov}_{{K}, 1}  = \mathbb{E} [ \mathbb{E}_{\btau} &\< Q_{{K}} \> \, \mathbb{E}_{\btau} \< Q_1 \> ] - \mathbb{E} \< Q_{K} \>  \,\mathbb{E} \< Q_1 \>\nonumber\\
 &=\mathbb{E} [ \mathbb{E}_{\btau} \< Q_{{K}} \> \, \< Q_1 \> ] - \mathbb{E} \< Q_{K} \>  \,\mathbb{E} \< Q_1 \> =
 \mathbb{E} [ \mathbb{E}_{\btau} \< Q_{K} \> ( \< Q_1 \> -\mathbb{E} \< Q_1 \>)]\,.
\end{align*}
Then, using successively Fubini's theorem, the Cauchy-Schwarz inequality and Lemma~\ref{thm:concentration-Q1}, we have
\begin{align}
&\int_{\underline h}^{\bar h} dh_0 \int_{\underline \alpha}^{\bar \alpha} d\alpha \,| {\rm Cov}_{K, 1} | 
	 = \int_{\underline \alpha}^{\bar \alpha} d\alpha \int_{\underline h}^{\bar h} dh_0\, \big|\mathbb{E} [ \mathbb{E}_{\btau} \< Q_{K} \> (\< Q_1 \> -\mathbb{E} \< Q_1 \>)]\big| \nonumber\\
	& \qquad \qquad \leq \Big \{ \int_{\underline \alpha}^{\bar \alpha} d\alpha \int_{\underline h}^{\bar h} dh_0 \,\mathbb{E}  \big[  \big( \mathbb{E}_{\btau} \< Q_{K} \> \big)^2  \big] \Big \}^{1/2} \Big \{ \int_{\underline \alpha}^{\bar \alpha} d\alpha \int_{\underline h}^{\bar h} dh_0 \,\mathbb{E} \big[ \big( \< Q_1 \> - \mathbb{E}\< Q_1 \>  \big)^2  \big] \Big \}^{1/2} \nonumber \\
	&\qquad \qquad \leq \frac{\sqrt{5C_P+40}}{n^{1/6}} \leq \frac{\sqrt{5C_P+40}}{n^{(\theta-1/2)/3}\tanh h_1}\,.\label{CovK1}
\end{align}
Note that the last inequality is valid because $0<\tanh h_1\leq 1$ and $\theta < 1$. For $k \geq 2$ we adopt an induction in two steps: From $2k-1$ to $2k$ and then from $2k$ to $2k+1$. 

We start with the induction step from $2k-1$ to $2k$. Suppose 
\begin{align}
\int_{\underline h}^{\bar h} dh_0 \int_{\underline \alpha}^{\bar \alpha} d\alpha \,| {\rm Cov}_{K, 2k-1} | 
\leq \frac{(2k-1)M_{2k-1}}{(\tanh h_1)^{2k-1} }\frac{\sqrt{5C_P+40}}{n^{(\theta-1/2)/3}}\,. \label{eq:Qp-induction:case-I:1}
\end{align}
The left hand side of \eqref{eq:concentration-Qp:sum} is
\begin{align}
\int_{\underline \alpha}^{\bar \alpha}d\alpha   \, \Big | &\sum_{k'=1}^{\infty} \frac{(-1)^{k'+1}}{k'} (\tanh h_1)^{k'} {\rm Cov}_{K,k'} \Big |\nonumber\\
	&= \int_{\underline \alpha}^{\bar \alpha}d\alpha \, \Big | \sum_{k'=1}^{\infty} (\tanh h_1)^{2k'-1} \Big ( \frac{1}{2k'-1} {\rm Cov}_{K, 2k'-1} - \frac{\tanh h_1}{2k'} {\rm Cov}_{K, 2k'} \Big )  \Big | \nonumber \\
	&=\int_{\underline \alpha}^{\bar \alpha}d\alpha \, \sum_{k'=1}^{\infty} (\tanh h_1)^{2k'-1} \Big | \frac{1}{2k'-1} {\rm Cov}_{K, 2k'-1} - \frac{\tanh h_1}{2k'} {\rm Cov}_{K, 2k'} \Big | \label{eq:Qp-induction:case-I:2} \\
	&\geq\int_{\underline \alpha}^{\bar \alpha}d\alpha \, \Big | \frac{(\tanh h_1)^{2k-1}}{2k-1} {\rm Cov}_{K, 2k-1} - \frac{(\tanh h_1)^{2k}}{2k} {\rm Cov}_{K, 2k} \Big | 
	\label{eq:Qp-induction:case-I:3}
\end{align}
where \eqref{eq:Qp-induction:case-I:2} follows from $h_1\geq 0$ and Lemma~\ref{thm:Harris:application}. By the triangle inequality we have
\begin{align}
	&\frac{(\tanh h_1)^{2k}}{2k} \int_{\underline h}^{\bar h} dh_0 \int_{\underline \alpha}^{\bar \alpha}d\alpha \,  | {\rm Cov}_{K, 2k} | \nonumber \\
	& = \int_{\underline h}^{\bar h} dh_0  \int_{\underline \alpha}^{\bar \alpha}d\alpha \, \Big | \Big ( \frac{(\tanh h_1)^{2k-1}}{2k-1} {\rm Cov}_{K, 2k-1} - \frac{(\tanh h_1)^{2k}}{2k} {\rm Cov}_{K, 2k} \Big ) - \frac{(\tanh h_1)^{2k-1}}{2k-1} {\rm Cov}_{K, 2k-1} \Big | \nonumber \\
	& \leq \int_{\underline h}^{\bar h} dh_0 \int_{\underline \alpha}^{\bar \alpha}d\alpha \, \Big | \frac{(\tanh h_1)^{2k-1}}{2k-1} {\rm Cov}_{K, 2k-1} - \frac{(\tanh h_1)^{2k}}{2k} {\rm Cov}_{K, 2k} \Big | \nonumber\\
	&\qquad\qquad\qquad\qquad+ \frac{(\tanh h_1)^{2k-1}}{2k-1} \int_{\underline h}^{\bar h} dh_0 \int_{\underline \alpha}^{\bar \alpha}d\alpha \, | {\rm Cov}_{K, 2k-1}| \nonumber \\
	& \leq \int_{\underline h}^{\bar h} dh_0 \int_{\underline \alpha}^{\bar \alpha}d\alpha \, \Big | \sum_{k'=1}^{\infty} \frac{(-1)^{k'+1}}{k'} (\tanh h_1)^{k'} {\rm Cov}_{K, k'} \Big | + \frac{(\tanh h_1)^{2k-1}}{2k-1} \int_{\underline h}^{\bar h} dh_0 \int_{\underline \alpha}^{\bar \alpha}d\alpha \,| {\rm Cov}_{K, 2k-1}  | \label{eq:Qp-induction:case-I:4} \\
	& \leq
	\frac{\sqrt{5C_P+40h_1^2}}{n^{(\theta-1/2)/3}}+M_{2k-1}\frac{\sqrt{5C_P+40}}{n^{(\theta-1/2)/3}}\label{eq:Qp-induction:case-I:5}
	\\
	& \leq
	(M_{2k-1}+1) \frac{\sqrt{5C_P +40}}{n^{(\theta-1/2)/3}}
	\label{eq:Qp-induction:case-I:6}
\end{align}
where~\eqref{eq:Qp-induction:case-I:4} follows from~\eqref{eq:Qp-induction:case-I:3}, then \eqref{eq:Qp-induction:case-I:5} follows from Lemma~\ref{thm:concentration-Qp:sum} 
and the hypothesis~\eqref{eq:Qp-induction:case-I:1}, and finally \eqref{eq:Qp-induction:case-I:6} uses $h_1\in(0,1]$. Summarizing, we have shown
\begin{align}\label{endfirststep}
 \int_{\underline h}^{\bar h} dh_0 \int_{\underline \alpha}^{\bar \alpha}d\alpha \,  | {\rm Cov}_{K, 2k} | \leq\frac{2kM_{2k}}{(\tanh h_1)^{2k}}\frac{\sqrt{5C_P+40}}{n^{(\theta-1/2)/3}}
\end{align}
with $M_{2k}=M_{2k-1}+1$.

Now we proceed similarly for the induction from $2k$ to $2k+1$. 
This time we start with
\begin{align}
\int_{\underline \alpha}^{\bar \alpha} d\alpha\, \Big | &\sum_{k'=2}^{\infty} \frac{(-1)^{k'+1}}{k'} (\tanh h_1)^{k'} {\rm Cov}_{K, k'} \Big |\nonumber\\
	&= \int_{\underline \alpha}^{\bar \alpha} d\alpha \,\Big | \sum_{k'=1}^{\infty} (\tanh h_1)^{2k'} \Big ( \frac{1}{2k'} {\rm Cov}_{K, 2k'} - \frac{\tanh h_1}{2k'+1} {\rm Cov}_{K, 2k'+1} \Big )  \Big | \nonumber \\
	&=\int_{\underline \alpha}^{\bar \alpha} d\alpha \sum_{k'=1}^{\infty} (\tanh h_1)^{2k'} \Big | \frac{1}{2k'} {\rm Cov}_{K, 2k'} - \frac{\tanh h_1}{2k'+1} {\rm Cov}_{K, 2k'+1} \Big | \label{eq:Qp-induction:case-II:2} \\
	&\geq \int_{\underline \alpha}^{\bar \alpha} d\alpha \,\Big | \frac{(\tanh h_1)^{2k}}{2k} {\rm Cov}_{K, 2k} - \frac{(\tanh h_1)^{2k+1}}{2k+1} {\rm Cov}_{K, 2k+1} \Big | \label{eq:Qp-induction:case-II:3}
\end{align}
where~\eqref{eq:Qp-induction:case-II:2} follows from Lemma~\ref{thm:Harris:application} and $h_1\geq 0$. Also we have
\begin{align}
&\int_{\underline \alpha}^{\bar \alpha} d\alpha\, \Big | \sum_{k'=2}^{\infty} \frac{(-1)^{k'+1}}{k'} (\tanh h_1)^{k'} {\rm Cov}_{K, k'} \Big |
	 = \int_{\underline \alpha}^{\bar \alpha} d\alpha \,\Big | \sum_{k'=1}^{\infty}  \frac{(-1)^{k'+1}}{k'} (\tanh h_1)^{k'} {\rm Cov}_{K, k'}  - \tanh h_1 {\rm Cov}_{K,1} \Big | \nonumber \\
	& \qquad \qquad  \qquad \qquad \qquad \qquad\leq \int_{\underline \alpha}^{\bar \alpha} d\alpha \,\Big | \sum_{k'=1}^{\infty} \frac{(-1)^{k'+1}}{k'} (\tanh h_1)^{k'} {\rm Cov}_{K, k'} \Big | + \tanh h_1 \int_{\underline \alpha}^{\bar \alpha} d\alpha  \,| {\rm Cov}_{K,1}  |\,. \label{eq:Qp-induction:case-II:4}
\end{align}
Then we proceed as
\begin{align}
	&\frac{(\tanh h_1)^{2k+1}}{2k+1} \int_{\underline h}^{\bar h} dh_0 \int_{\underline \alpha}^{\bar \alpha}d\alpha \, | {\rm Cov}_{K, 2k+1} | \nonumber \\
	& = \int_{\underline h}^{\bar h} dh_0  \int_{\underline \alpha}^{\bar \alpha} d\alpha \, \Big | \Big ( \frac{(\tanh h_1)^{2k}}{2k} {\rm Cov}_{K, 2k} - \frac{(\tanh h_1)^{2k+1}}{2k+1} {\rm Cov}_{K, 2k+1} \Big ) - \frac{(\tanh h_1)^{2k}}{2k} {\rm Cov}_{K, 2k} \Big | \nonumber \\
	& \leq \int_{\underline h}^{\bar h} dh_0 \int_{\underline \alpha}^{\bar \alpha} d\alpha \,\Big | \frac{(\tanh h_1)^{2k}}{2k} {\rm Cov}_{K, 2k} - \frac{(\tanh h_1)^{2k+1}}{2k+1} {\rm Cov}_{K, 2k+1} \Big | + \frac{(\tanh h_1)^{2k}}{2k} \int_{\underline h}^{\bar h} dh_0 \int_{\underline \alpha}^{\bar \alpha}d\alpha \,  | {\rm Cov}_{K, 2k} | \nonumber \\
	& \leq \int_{\underline h}^{\bar h} dh_0 \int_{\underline \alpha}^{\bar \alpha} d\alpha\, \Big | \sum_{k'=2}^{\infty} \frac{(-1)^{k'+1}}{k'} (\tanh h_1)^{k'} {\rm Cov}_{K, k'} \Big | + \frac{(\tanh h_1)^{2k}}{2k} \int_{\underline h}^{\bar h} dh_0 \int_{\underline \alpha}^{\bar \alpha}d\alpha \, | {\rm Cov}_{K, 2k}  | \label{eq:Qp-induction:case-II:5} \\
	& \leq \int_{\underline h}^{\bar h} dh_0 \int_{\underline \alpha}^{\bar \alpha} d\alpha \,\Big | \sum_{k'=1}^{\infty} \frac{(-1)^{k'+1}}{k'} (\tanh h_1)^{k'} {\rm Cov}_{K, k'} \Big | + \tanh h_1 \int_{\underline h}^{\bar h} dh_0 \int_{\underline \alpha}^{\bar \alpha} d\alpha \, | {\rm Cov}_{K,1}  | \nonumber \\
	& \qquad \qquad \qquad\qquad + \frac{(\tanh h_1)^{2k}}{2k} \int_{\underline h}^{\bar h} dh_0 \int_{\underline \alpha}^{\bar \alpha}d\alpha \,| {\rm Cov}_{K, 2k}  | \label{eq:Qp-induction:case-II:6} \\
	& \leq  \frac{\sqrt{5C_P + 40 h_1^2}}{n^{(\theta-1/2)/3}} + \frac{\sqrt{5C_P+40}}{n^{(\theta-1/2)/3}} + M_{2k} \frac{\sqrt{5C_P+40}}{n^{(\theta-1/2)/3}} \label{eq:Qp-induction:case-II:7}
	\\ &
	\leq 
	(M_{2k} + 2) \frac{\sqrt{5C_P+40}}{n^{(\theta-1/2)/3}}\nonumber
\end{align}
where \eqref{eq:Qp-induction:case-II:5} follows from \eqref{eq:Qp-induction:case-II:3}, then \eqref{eq:Qp-induction:case-II:6} follows from \eqref{eq:Qp-induction:case-II:4}, 
and finally \eqref{eq:Qp-induction:case-II:7} follows from Lemma~\ref{thm:concentration-Qp:sum}, \eqref{CovK1} and \eqref{endfirststep}. Summarizing,
\begin{align*}
 \int_{\underline h}^{\bar h} dh_0 \int_{\underline \alpha}^{\bar \alpha}d\alpha \, | {\rm Cov}_{K, 2k+1} | 
 \leq \frac{(2k+1)M_{2k+1}}{(\tanh h_1)^{2k+1}}\frac{\sqrt{5C_P +40}}{n^{(\theta-1/2)/3}}
\end{align*}
with $M_{2k+1}=M_{2k} + 2$, which ends the induction argument.
\end{proof}

\subsection{Proof of Theorem \ref{thm:multiconc}}\label{sec:combiningConc}
We finally show how to combine all the concentration results we obtained in order to prove the 
following theorem. This theorem is a mild variant of Theorem \ref{thm:multiconc}. Inequality \eqref{97} below is exactly Theorem \ref{thm:multiconc}.
%
\begin{theorem}[Overlap concentration]\label{thm:over_conc_long}
Assume the Hamiltonian $\sH$ given by \eqref{Htot} satisfies both GKS inequalities and also that Condition~\ref{conditionC} holds. 
Then for any moment $p\geq 2$, $[\underline h, \bar h]\subset (0,1)$, $[\underline\alpha, \bar\alpha]\subset (0,1)$, $h_1\in (0,1]$, $\theta \in (1/2, 7/8]$,
\begin{align*}
\int_{\underline h}^{\bar h} dh_0 \int_{\underline \alpha}^{\bar \alpha} d\alpha\, \mathbb{E} \big\< \big | Q_k^p - \mathbb{E} [ \< Q_k \> ]^p  \big | \big\>
	\leq
	\frac{2pk}{(\tanh h_1)^{k/2}} \frac{(5C_P + 40)^{1/4}}{n^{(\theta-1/2)/6}}
	\,.
\end{align*}
\label{thm:full-concentration-Qp}
\end{theorem}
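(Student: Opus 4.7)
This theorem looks like a direct corollary of Theorem~\ref{thm:multiconc}, and the plan is to deduce it by combining an elementary factorization with two applications of the Cauchy--Schwarz inequality. The idea is to first reduce the $p$-th power deviation $Q_k^p - \mathbb{E}[\< Q_k\>]^p$ to the linear deviation $Q_k - \mathbb{E}\< Q_k\>$, then pass from the first absolute moment of that linear deviation to its second moment, and finally invoke Theorem~\ref{thm:multiconc}.

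For the first step I would use the telescoping factorization $a^p - b^p = (a-b)\sum_{l=0}^{p-1} a^{p-1-l}b^l$ with $a=Q_k$ and $b=\mathbb{E}\< Q_k\>$. Since $\sigma_i^{(\alpha)}\in\{\pm 1\}$ we have $|Q_k|\leq 1$ pointwise, and under both GKS inequalities $\mathbb{E}\< Q_k\> = n^{-1}\sum_i \mathbb{E}\< \sigma_i\>^k\in[0,1]$; each of the $p$ terms in the telescoping sum is then bounded by one in absolute value, which yields the pointwise estimate
\[
\big|Q_k^p - \mathbb{E}[\< Q_k\>]^p\big| \;\leq\; p\,\big|Q_k - \mathbb{E}\< Q_k\>\big|.
\]

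For the second step I would apply Jensen's inequality to the positive operation $\mathbb{E}\<-\>$ to obtain $\mathbb{E}\big\< |Q_k - \mathbb{E}\< Q_k\>|\big\>\leq \sqrt{\mathbb{E}\big\< (Q_k - \mathbb{E}\< Q_k\>)^2\big\>}$, and then apply Cauchy--Schwarz a second time on the double integral over $[\underline h,\bar h]\times[\underline\alpha,\bar\alpha]\subset(0,1)^2$ (whose Lebesgue measure is at most one) to pull the square root outside the integration, arriving at
\[
\int_{\underline h}^{\bar h}\! dh_0 \int_{\underline\alpha}^{\bar\alpha}\! d\alpha\; \mathbb{E}\big\< \big|Q_k - \mathbb{E}\< Q_k\>\big|\big\>
\;\leq\; \sqrt{\, \int_{\underline h}^{\bar h}\! dh_0 \int_{\underline\alpha}^{\bar\alpha}\! d\alpha\; \mathbb{E}\big\< (Q_k - \mathbb{E}\< Q_k\>)^2\big\>\,}.
\]
It would then suffice to invoke Theorem~\ref{thm:multiconc} to bound the integrated second moment by $4k^2\sqrt{5C_P+40}/[(\tanh h_1)^k\, n^{(\theta-1/2)/3}]$; taking the square root and multiplying by the factor $p$ from the first step reproduces exactly the claimed prefactor $2pk(5C_P+40)^{1/4}/(\tanh h_1)^{k/2}$ and the advertised decay $n^{-(\theta-1/2)/6}$.

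I do not expect any real obstacle in this argument, since all of the heavy lifting (the three-step concentration strategy of Section~\ref{sec:4.5}, the induction on the order $k$ of the overlap, and the use of the Harris inequality via Lemma~\ref{thm:Harris:application}) has already been carried out in the proof of Theorem~\ref{thm:multiconc}. The only sanity checks are that the telescoping factorization is valid regardless of the sign of $Q_k$, and that the numerical constants combine correctly when taking the square root of the bound of Theorem~\ref{thm:multiconc}.
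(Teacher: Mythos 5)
Your proposal is correct and follows essentially the same route as the paper: the paper first establishes the integrated second-moment bound (its inequality \eqref{97}, which is exactly Theorem~\ref{thm:multiconc}, obtained from the decomposition \eqref{eq:full-concentration-Qp:1} together with Theorem~\ref{thm:thermalconc} and Lemmas~\ref{thm:concentration-Poisson} and \ref{thm:Qp-induction}), and then uses precisely your telescoping factorization $|Q_k^p-\mathbb{E}[\<Q_k\>]^p|\le p\,|Q_k-\mathbb{E}\<Q_k\>|$ followed by Cauchy--Schwarz over the $(h_0,\alpha)$ rectangle to get the stated prefactor and the $n^{-(\theta-1/2)/6}$ decay. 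Your two-step Jensen-then-Cauchy--Schwarz is just a repackaging of the paper's single Cauchy--Schwarz over the product measure, so there is no substantive difference.
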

\begin{proof}
We integrate both sides of \eqref{eq:full-concentration-Qp:1} over $h_0$ and $\alpha$. As all the square terms are bounded by $1$, by Fubini's theorem we 
are free to exchange the order of the integrals. Theorem \ref{thm:thermalconc}, and Lemmas \ref{thm:concentration-Poisson} and \ref{thm:Qp-induction} are applied accordingly and lead to the estimate
(for any $k\geq 1$)
\begin{align}
\int_{\underline h}^{\bar h} dh_0 \int_{\underline \alpha}^{\bar \alpha} d\alpha\, \mathbb{E}\big\< \big(Q_k - \mathbb{E} \< Q_k \> \big)^2 \big\>
	& \leq \frac{2k}{n} + \frac{(\bar \alpha^2 -\underline \alpha^2) k h_1}{2n^{1-\theta}} + \frac{kM_k}{(\tanh h_1)^k}\frac{\sqrt{5C_P+40}}{n^{(\theta-1/2)/3}} \nonumber \\
	& \leq  \frac{4k^2}{(\tanh h_1)^k}\frac{\sqrt{5C_P+40}}{n^{(\theta-1/2)/3}}
	\label{97}
\end{align}
using $\theta \in (1/2, 7/8]$ (the $7/8$ is enforced by $(\theta-1/2)/3 \le 1-\theta$), $[\underline\alpha, \bar\alpha]\subset (0,1)$, $h_1\in (0,1]$ and $M_k< 3k/2$.
Finally, observe that
\begin{align*}
\mathbb{E} \big\< \big | Q_k^p - \mathbb{E} [ \< Q_k \> ]^p  \big | \big\> 
	& = \mathbb{E}  \Big\< \Big |  ( Q_k - \mathbb{E} \< Q_k \> ) \sum_{l=0}^{p-1}Q_k^{p-1-l} \mathbb{E} [\< Q_k \>]^{l} \Big | \Big\> \leq p \,\mathbb{E}  \big\< \big | Q_k - \mathbb{E} \< Q_k \> \big | \big\> \,.
\end{align*}
By the Cauchy-Schwarz inequality we then have
\begin{align*}
\int_{\underline h}^{\bar h} dh_0 \int_{\underline \alpha}^{\bar \alpha} d\alpha\, \mathbb{E}  \big\< \big | Q_k^p - \mathbb{E} [ \< Q_k \> ]^p  \big | \big\> 
	& \leq p \,\Big\{ \int_{\underline h}^{\bar h} dh_0 \int_{\underline \alpha}^{\bar \alpha} d\alpha \Big\}^{1/2} \Big \{ \int_{\underline h}^{\bar h} dh_0 \int_{\underline \alpha}^{\bar \alpha} d\alpha \,\mathbb{E}  \big\< \big( Q_k - \mathbb{E} \< Q_k \> \big)^2 \big\>  \Big \}^{1/2}
\end{align*}
which ends the proof once combined with \eqref{97}.
\end{proof}

%

%
%
%
%
\appendix
\section{Some technicalities}
\subsection{Proof of the approximation inequality \eqref{approxinequ}}\label{proof3.3}

Note that
\begin{align*}
\big | p_{n}(h_0, h_1, \alpha) - p_{n}(0,0,0) \big |
	& = \big | p_{n}(h_0, h_1, \alpha) - p_{n}(0,h_1,0) \big | \\
	& \leq \big | p_{n}(h_0, h_1, \alpha) - p_{n}(0, h_1, \alpha) \big | + \big | p_{n}(0, h_1, \alpha) - p_{n}(0, h_1, 0) \big |\,.
\end{align*}
We have $| \frac{d p_{n}(h_0, h_1, \alpha)}{d{h_0}}| = |\E\<Q_1\>| \leq 1$ and from \eqref{68} 
we also have $| \frac{d p_n(0, h_1, \alpha)}{d\alpha}| \leq h_1 n^{-(1-\theta)}$. Thus by the mean value theorem we obtain \eqref{approxinequ}, i.e. $|p_n(h_0,h_1,\alpha) - p_n(0,0,0)| \leq h_0 + \alpha h_1n^{-(1-\theta)}$.
\subsection{A property of the Poisson distribution}\label{poisson-appendix}


Any function $g: \mathbb{N} \to \mathbb{R}$ of a random variable $X\sim{\rm Poi}(\nu)$ with 
Poisson distribution and mean $\nu$, and such that $\mathbb{E}\,g(X)$ exists, satisfies
\begin{align}
\frac{d \, \mathbb{E}\,g(X)}{d \nu} & = \sum_{k=0}^{\infty} \frac{d}{d\nu}\bigg\{\frac{\nu^{k} e^{-\nu}}{k!}\bigg\} g(k) 
= \sum_{k=1}^{\infty} \frac{\nu^{k-1} e^{-\nu}}{(k-1)!} g(k) - \sum_{k=0}^{\infty} \frac{\nu^k e^{-\nu}}{k!} g(k)
\nonumber \\ &
=
\sum_{k=0}^{\infty} \frac{\nu^{k} e^{-\nu}}{k!} g(k + 1) - \sum_{k=0}^{\infty} \frac{\nu^k e^{-\nu}}{k!} g(k)
=
\mathbb{E}\,g(X+1) - \mathbb{E}\, g(X)\,.
\label{eq:dPoissonMean}
\end{align}


%
%

%
%
\subsection{Multivariate Harris inequality}\label{appendix-harris}
For completeness we provide here a simple proof of the multivariate version of the Harris inequality. We refer to \cite{Liggett:2011} for more information. 

\begin{lemma}[Multivariate version of the Harris inequality]
Let $g,\tilde{g}:\mathbb{R}^n\mapsto \mathbb{R}$ be two functions of the random vector $\bx= (x_1, \ldots, x_n)$ where all components are independent random variables. If for all $i \in \{1, \dots, n\}$ $g$ and $\tilde{g}$ are both monotone w.r.t. $x_i$ with same monotonicity, i.e. $\partial_{x_i}g(\bx)\,\partial_{x_i}\tilde{g}(\bx)\ge 0 \ \forall \ i$, then $\mathbb{E}[g(\bx)\, \tilde{g}(\bx)] - \mathbb{E}\,g(\bx) \, \mathbb{E}\,\tilde{g}(\bx) \geq 0$.
\label{thm:Harris}
\end{lemma}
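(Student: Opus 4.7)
The plan is to prove the multivariate Harris inequality by induction on the dimension $n$, following the classical ``independent copy'' argument for the one-variable base case and then integrating out one coordinate at a time for the inductive step.

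First I would reduce the statement to the case where both $g$ and $\tilde g$ are coordinate-wise non-decreasing in every $x_i$. The hypothesis $\partial_{x_i} g \cdot \partial_{x_i} \tilde g \geq 0$ means that for each fixed $i$, the two functions are monotone in $x_i$ with the same direction, though this direction may depend on $i$. For every coordinate in which both functions are non-increasing I would perform the change of variable $x_i \mapsto -x_i$; since the $x_i$ are independent, the new variables are still independent, and both functions become non-decreasing in that coordinate. Thus from here on we may assume $g,\tilde g$ are coordinate-wise non-decreasing in every $x_i$.

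For the base case $n=1$, let $X'$ be an independent copy of $x_1$. Since $g$ and $\tilde g$ are both non-decreasing on $\mathbb{R}$, the two factors in
\begin{align*}
\big(g(X) - g(X')\big)\big(\tilde g(X) - \tilde g(X')\big)
\end{align*}
have the same sign, hence the product is non-negative almost surely. Taking expectations and expanding gives
\begin{align*}
0 \leq \mathbb{E}\big[\big(g(X) - g(X')\big)\big(\tilde g(X) - \tilde g(X')\big)\big] = 2\mathbb{E}[g(X)\tilde g(X)] - 2\,\mathbb{E}\,g(X)\,\mathbb{E}\,\tilde g(X),
\end{align*}
which is the desired inequality.

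For the inductive step, assume the result for $n-1$ independent variables. Define the partial averages
\begin{align*}
G(x_1,\dots,x_{n-1}) \equiv \mathbb{E}_{x_n}\,g(\bx), \qquad \tilde G(x_1,\dots,x_{n-1}) \equiv \mathbb{E}_{x_n}\,\tilde g(\bx).
\end{align*}
Because the $x_i$ are independent, $G$ and $\tilde G$ are genuine functions of $(x_1,\dots,x_{n-1})$, and integration against the law of $x_n$ preserves non-decreasingness in each remaining coordinate, so $G$ and $\tilde G$ satisfy the hypotheses of the lemma on $\mathbb{R}^{n-1}$. Applying the one-variable base case in $x_n$ with $(x_1,\dots,x_{n-1})$ fixed yields
\begin{align*}
\mathbb{E}_{x_n}\big[g(\bx)\,\tilde g(\bx)\big] \geq G(x_1,\dots,x_{n-1})\,\tilde G(x_1,\dots,x_{n-1}).
\end{align*}
Taking the expectation over $(x_1,\dots,x_{n-1})$ and invoking the inductive hypothesis gives
\begin{align*}
\mathbb{E}[g(\bx)\tilde g(\bx)] \geq \mathbb{E}[G\,\tilde G] \geq \mathbb{E}\,G \cdot \mathbb{E}\,\tilde G = \mathbb{E}\,g(\bx)\cdot\mathbb{E}\,\tilde g(\bx),
\end{align*}
which completes the induction.

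The main technical point—really the only subtle one—is justifying that monotonicity is inherited by $G$ and $\tilde G$ after integrating out $x_n$. This is immediate when one formulates monotonicity globally (non-decreasingness), since then $G(x_1,\dots,x_{n-1}) \leq G(x_1,\dots,x_i',\dots,x_{n-1})$ whenever $x_i \leq x_i'$ by integrating the pointwise inequality $g(\bx) \leq g(\bx\text{ with } x_i' \text{ in place of } x_i)$ against the law of $x_n$; differentiating under the expectation is not needed. This is precisely why the preliminary reduction to global coordinate-wise non-decreasingness is the cleanest way to organise the argument.
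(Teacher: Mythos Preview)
Your proof is correct and follows essentially the same approach as the paper: the independent-copy argument for one variable, then an induction that integrates out one coordinate at a time using that partial averages inherit the common monotonicity. The only cosmetic differences are that the paper inducts on the number of coordinates already integrated (from $x_1$ upward) rather than on the dimension, and it does not bother with the preliminary reduction to coordinate-wise non-decreasing functions, working directly with ``same monotonicity'' in each coordinate.
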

\begin{proof}
Let $\bx_i^j \equiv (x_i, x_{i+1}, \dots, x_{j})$.
The monotonicity w.r.t. $x_1$ implies
\begin{align*}
\mathbb{E}_{x_1} \mathbb{E}_{x'_1} \big [ \big ( g(x_1, \bx_2^n) - g(x'_1, \bx_2^n) \big ) \big ( \tilde{g}(x_1, \bx_2^n) - \tilde{g}(x'_1, \bx_2^n) \big ) \big ] \geq 0
\end{align*}
which by expanding the product can be simplified to
\begin{align*}
\mathbb{E}_{x_1}[g(\bx) \,\tilde{g}(\bx)]-\mathbb{E}_{x_1}g(\bx)\, \mathbb{E}_{x_1}\tilde{g}(\bx) \geq 0\,.
\end{align*}
The proof then proceeds by induction. Suppose 
\begin{align}
\mathbb{E}_{\bx_1^{i-1}}[g(\bx)\, \tilde{g}(\bx)]-\mathbb{E}_{\bx_1^{i-1}}g(\bx)\, \mathbb{E}_{\bx_1^{i-1}}\tilde{g}(\bx)\label{hypo}
	\geq 0\,.	
\end{align}
Again, the monotonicity w.r.t. $x_i$ implies 
\begin{align*}
&\mathbb{E}_{x_i} \mathbb{E}_{x'_i} \big [ \big ( \mathbb{E}_{\bx_1^{i-1}}  g(\bx_1^{i-1}, x_i, \bx_{i+1}^n)  - \mathbb{E}_{\bx_1^{i-1}}  g(\bx_1^{i-1}, x'_i, \bx_{i+1}^n)  \big ) \nonumber\\
&\qquad \qquad \qquad\cdot\big ( \mathbb{E}_{\bx_1^{i-1}}  \tilde{g}(\bx_1^{i-1}, x_i, \bx_{i+1}^n)  - \mathbb{E}_{\bx_1^{i-1}}  \tilde{g}(\bx_1^{i-1}, x'_i, \bx_{i+1}^n)  \big ) \big ] \geq 0
\end{align*}
which can be simplified to
\begin{align}
\mathbb{E}_{x_i} \big [ \mathbb{E}_{\bx_1^{i-1}} g(\bx) \, \mathbb{E}_{\bx_1^{i-1}}  \tilde{g}(\bx) \big ]-\mathbb{E}_{\bx_1^{i}}g(\bx)\, \mathbb{E}_{\bx_1^{i}}\tilde{g}(\bx) \geq 0\,.
\label{eq:harris:3}
\end{align}
The induction is ended by noting that with the hypothesis \eqref{hypo} the identity \eqref{eq:harris:3} can further be relaxed to
\begin{align*}
\mathbb{E}_{\bx_1^{i}}[g(\bx)\, \tilde{g}(\bx)]-\mathbb{E}_{\bx_1^{i}}g(\bx)\, \mathbb{E}_{\bx_1^{i}}\tilde{g}(\bx) \geq 0\,.
\end{align*}
This ends the induction argument and the proof.
\end{proof}
\section{On the concentration and existence of the pressure} \label{sec:free_ent_conc}
We consider Hamiltonian \eqref{mostgeneralhamiltonian} with independent random couplings $J_X$, $X\subset \{1,\dots, n\}$ and prove the following generic result used in \eqref{concpressuregeneral}. We then discuss a simple argument and condition that guarantees the existence of the thermodynamic limit using the first GKS inequality. 
We verify that these results apply to Example \ref{ex1}.

\begin{proposition}[Concentration of the pressure]
\label{thm:free_ent_conc}
Let $J_X$, $X\subset \{1,\dots, n\}$ be independent random variables such that 
$\sum_{X\subset \{1,\dots, n\}} \mathrm{Var}(J_X) \leq C_P n$
for some numerical constant $C_P>0$.
Then we have
$\mathbb{E}[(P_n - p_n)^2] \leq C_P/n$.
\end{proposition}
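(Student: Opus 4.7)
The plan is to apply the Efron--Stein inequality to the function $P_n = P_n(\{J_X\}_{X\subset\{1,\dots,n\}})$ viewed as a function of the independent couplings, combined with a simple Lipschitz estimate that comes from computing $\partial P_n/\partial J_X$ directly from the Gibbs distribution.

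First I would record the pointwise derivative
\begin{align*}
\frac{\partial P_n}{\partial J_X} \;=\; \frac{1}{n}\,\langle \sigma_X\rangle,
\end{align*}
so that $|\partial P_n/\partial J_X|\leq 1/n$ uniformly in the configuration of the remaining couplings. This is the only place where any structure of the Hamiltonian enters; note that we do not use $J_X\ge 0$ here, consistent with the statement. By the mean value theorem, if $J_X'$ is an independent copy of $J_X$ and $P_n^{(X)}$ denotes the pressure obtained by replacing $J_X$ with $J_X'$, then
\begin{align*}
\bigl|P_n - P_n^{(X)}\bigr| \;\leq\; \frac{1}{n}\,|J_X - J_X'|.
\end{align*}

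Next I would invoke the Efron--Stein inequality: since the $J_X$ are independent,
\begin{align*}
\mathbb{E}\bigl[(P_n - p_n)^2\bigr] \;=\; \mathrm{Var}(P_n) \;\leq\; \frac{1}{2}\sum_{X\subset\{1,\dots,n\}}\mathbb{E}\bigl[(P_n - P_n^{(X)})^2\bigr].
\end{align*}
Substituting the Lipschitz bound above and using $\mathbb{E}[(J_X-J_X')^2]=2\,\mathrm{Var}(J_X)$ gives
\begin{align*}
\mathrm{Var}(P_n) \;\leq\; \frac{1}{n^2}\sum_{X\subset\{1,\dots,n\}}\mathrm{Var}(J_X) \;\leq\; \frac{C_P}{n},
\end{align*}
which is the claimed bound.

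There is no real obstacle here: the proof is essentially a one-line calculation once the Lipschitz constant $1/n$ is identified, and the finite second moment hypothesis is exactly what makes Efron--Stein applicable termwise. The only thing to check carefully in writing it out is that the derivative computation is valid (the partition function is smooth and strictly positive in each $J_X$), and that the prefactor $1/n^2$ multiplied by the total variance budget $C_P n$ yields $C_P/n$ as required. For the Example~\ref{ex1} application mentioned after the proposition, one only needs to verify that the Bernoulli couplings satisfy $\sum_X \mathrm{Var}(J_X)\leq C_P n$, which follows because there are $\mathcal{O}(\binom{n}{k})$ couplings of each type, each with variance of order $n/\binom{n}{k}$, summing to $\mathcal{O}(n)$.
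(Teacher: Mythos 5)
Your proposal is correct and follows essentially the same route as the paper: Efron--Stein over the independent couplings combined with the coordinate-wise Lipschitz bound $|P_n - P_n^{(X)}| \leq \frac{1}{n}|J_X - J'_X|$, the only cosmetic difference being that the paper derives this bound by an explicit interpolation in $s$ while you use the derivative $\partial P_n/\partial J_X = \frac{1}{n}\langle\sigma_X\rangle$ and the mean value theorem, which is the same estimate. No gaps.
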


\begin{proof}
The proof is a simple application of the Efron-Stein inequality. Set $\bm{J}\equiv (J_X, X\subset\{1,\dots,n\})$.
Let $\bm{J}^{(X)}$ be a vector such that $\bm{J}^{(X)}$ differs from $\bm{J}$ only at the $X$--th component which becomes $J_X^\prime$ drawn 
independently from the same distribution as the one of $J_X$ (note that the random variables $J_X$ for different $X$ do not necessarily have the same distribution). Efron Stein's inequality tells us that
\begin{align}
\mathbb{E} \big [ ( P_n - \mathbb{E}\,P_n )^2 \big ]
	& \leq \frac{1}{2} \sum_{X\subset \{1,\dots,n\}} \mathbb{E}_{\bm{J}\setminus J_X} \mathbb{E}_{J_X}\mathbb{E}_{J'_X} \big [ ( P_n(\bm{J}) - P_n(\bm{J}^{(X)}) )^2 \big ] \,.
	\label{eq:free_ent_conc:ef1}
\end{align}
An elementary interpolation gives
\begin{align*}
\big | P_n(\bm{J}) - P_n(\bm{J}^{(X)})\big |
	& = \frac{1}{n} \Big | \int_0^1 ds \frac{d}{ds} \ln \sum_{\bm{\sigma} \in \{\pm 1 \}^n} \exp \big \{- \sH_0(\bm{\sigma},\bm{J}^{(X)}) + s (J_X -J'_X) \sigma_X \big \} \Big | \nonumber \\
	& = \frac{1}{n} \Big | \int_0^1 ds\, (J_X - J'_X)  \< \sigma_X \>_s  \Big | \leq \frac{1}{n} | J_X - J'_X | \,.
\end{align*}
Replacing in \eqref{eq:free_ent_conc:ef1} (and recalling $p_n\equiv \mathbb{E}\, P_n$) gives
\begin{align*}
\mathbb{E}\big[(P_n - p_n)^2\big]
	 \leq \frac{1}{2n^2} \sum_{X\subset\{1,\dots,n\}} \mathbb{E}_{J_X} \mathbb{E}_{J'_X} \big [ ( J_X - J'_X )^2 \big ] 
	= \frac{1}{n^2}\sum_{X\subset\{1,\dots,n\}}\mathrm{Var}(J_X)\,.
\end{align*}
With the hypothesis on $\mathrm{Var}(J_X)$ the proof is complete.
\end{proof}

An easy and more or less standard superadditivity argument proves that the thermodynamic limit exists for the ferromagnetic model \eqref{mostgeneralhamiltonian}. 
We give the argument for completeness. 
For simplicity we consider that there exists
a maximal size $x_{\rm max}$ independent of $n$ such that $\vert X\vert \leq x_{\rm max}$. We suppose furthermore that all $J_X$ are independent with a distribution that depends only on the cardinalities 
$\vert X\vert$ (in other words given a cardinality they are i.i.d.) and also 
\begin{align}\label{conditionthermo}
 \frac1n\sum_{X\subset \{1, \dots, n\}}\mathbb{E}\,J_X  = \frac1n\sum_{\vert X\vert =1}^{x_{\rm max}}\binom{n}{\vert X\vert} m(\vert X\vert) \leq C
\end{align}
 where $m({\vert X\vert}) \equiv \mathbb{E}\,J_X$ and $C$ a positive constant independent of $n$. 
 
\begin{proposition}[Existence of the thermodynamic limit of the pressure]\label{propD2}
Let $J_X, X \subset \{1,\dots,n\}$ be independent random variables with a probability distribution supported on $\mathbb{R}_{\ge 0}$ depending only on 
$\vert X\vert$. Moreover assume  
$J_X=0$ for $\vert X\vert > x_{\rm max}$ independent of $n$. Let \eqref{conditionthermo} be satisfied. Then $\lim_{n \rightarrow +\infty} p_n$ exists and is finite.
\end{proposition}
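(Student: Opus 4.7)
The plan is to establish superadditivity of the sequence $N\mapsto N p_N$ via the first GKS inequality, and then conclude using Fekete's lemma together with an elementary a priori upper bound.

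\medskip

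\noindent\textbf{Step 1: Superadditivity.} Fix $N=n+m$ and split $\{1,\dots,N\}$ into the disjoint blocks $A=\{1,\dots,n\}$ and $B=\{n+1,\dots,N\}$. Recall that $\partial \ln {\cal Z}/\partial J_X = \langle \sigma_X\rangle \geq 0$ by the first GKS inequality \eqref{eq:GKS}, so $\ln {\cal Z}_N$ is monotonically nondecreasing in each coupling $J_X$. Therefore, replacing every ``cross'' coupling $J_X$ with $X\cap A\neq \emptyset$ and $X\cap B\neq \emptyset$ by $0$ can only decrease $\ln {\cal Z}_N$. After this replacement the Hamiltonian decouples into its $A$ and $B$ parts and the partition function factorizes as ${\cal Z}^A\cdot {\cal Z}^B$, where ${\cal Z}^A$ and ${\cal Z}^B$ are built only with couplings supported inside $A$ and $B$ respectively. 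This gives, for every realization of the disorder,
\begin{align*}
\ln {\cal Z}_N \geq \ln {\cal Z}^A + \ln {\cal Z}^B\,.
\end{align*}
Now the key point: since the $J_X$ are independent and their distribution depends only on $|X|$, the joint law of $\{J_X: X\subset A\}$ coincides with that of the couplings of a genuine system of size $n=|A|$, and similarly for $B$. Taking expectations therefore yields $N p_N \geq n p_n + m p_m$, i.e. $a_N\equiv N p_N$ is superadditive.

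\medskip

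\noindent\textbf{Step 2: A priori upper bound.} Using $\sigma_X\in\{-1,+1\}$ and $J_X\geq 0$,
\begin{align*}
{\cal Z}_n = \sum_{\bsig} \exp\Big(\sum_X J_X \sigma_X\Big) \leq 2^n \exp\Big(\sum_X J_X\Big)\,,
\end{align*}
so $P_n \leq \ln 2 + n^{-1}\sum_X J_X$. Taking an expectation and invoking the hypothesis \eqref{conditionthermo} together with the fact that $J_X=0$ for $|X|>x_{\max}$ (which ensures the sum has at most polynomially many nonzero contributions), we obtain $p_n \leq \ln 2 + C$ uniformly in $n$.

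\medskip

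\noindent\textbf{Step 3: Fekete and conclusion.} Applying Fekete's subadditive (here superadditive) lemma to the superadditive sequence $a_N=N p_N$ yields $\lim_{N\to\infty} p_N = \sup_{N\geq 1} p_N$. By Step 2 this supremum is at most $\ln 2 + C < +\infty$, and it is bounded below by $p_1$ (which is finite since the Hamiltonian on a single site is integrable under the assumed distribution). Hence the thermodynamic limit exists and is finite.

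\medskip

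\noindent\textbf{Main obstacle.} The only nontrivial ingredient is the matching of distributions in Step 1: the superadditivity bound $\ln {\cal Z}_N \geq \ln {\cal Z}^A + \ln {\cal Z}^B$ is deterministic, but in order to identify $\mathbb{E}\ln {\cal Z}^A$ with $n p_n$ one genuinely needs both the independence of the $J_X$ and the assumption that their law depends solely on $|X|$. Removing this symmetry assumption would force one to either modify the statement (replacing the limit by $\liminf$ along a subsequence, say) or invoke a self-averaging/interpolation argument in the spirit of \cite{bayati2010combinatorial,salez2016interpolation}.
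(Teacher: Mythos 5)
Your proof is correct and follows essentially the same route as the paper: superadditivity of $n p_n$ obtained from the first GKS inequality (your monotonicity-in-$J_X$ argument is just the integrated form of the paper's interpolation over the cross couplings), identification of the block pressures with $p_{n_1}, p_{n_2}$ via independence and the dependence of the law on $\vert X\vert$ only, Fekete's lemma, and the same uniform upper bound $p_n \leq \ln 2 + \frac1n\sum_X \mathbb{E}\,J_X \leq \ln 2 + C$ from \eqref{conditionthermo}.
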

\begin{proof}
Fix non-zero integers $n_1$, $n_2$ both greater than $x_{\rm max}$ and $n \equiv n_1 + n_2$. Consider a set of realizations $S\equiv \{J_X, X\subset\{1,\dots,n\}\}$. This set can be split in three 
disjoint sets $S=S_1\cup S_2\cup S_{12}$ with
$S_1 \equiv \{J_X, X\subset\{1,\dots, n_1\}\}$, $S_2 \equiv \{J_X, X\subset\{n_1+1,\dots, n\}\}$ and $S_{12} \equiv \{J_X, X\cap\{1,\dots, n_1\} \neq \emptyset, X\cap\{n_1+1,\dots, n\} \neq \emptyset\}$.
Let $\ln {\cal Z}(S)/n$ the pressure corresponding to the Hamiltonian with couplings in $S$ and 
$\ln {\cal Z}(S_1)/n_1$ and $\ln {\cal Z}(S_2)/n_2$ the pressures corresponding to the Hamiltonians with couplings from $S_1$ and $S_2$ only. One can show, using the first GKS inequality, that\footnote{Interpolating from $t=0$ to $1$ over $\ln \sum_{\bsig\in\{\pm1\}^n}\exp(-{\cal H}_t(\bsig))$ with ${\cal H}_t(\bsig)\equiv-\sum_{X:J_X\in S_1\cup S_2} J_X \sigma_X-t\sum_{X:J_X\in S_{12}} J_X \sigma_X$ shows that $\ln {\cal Z}(S) = \ln {\cal Z}(S_1) + \ln {\cal Z}(S_2)+\sum_{X:J_X\in S_{12}}J_X\int_0^1dt\langle \sigma_X\rangle_t$ (using that $S_1$ and $S_2$ are disjoint). Then the first GKS inequality gives $\langle \sigma_X\rangle_t\ge 0$. As $J_X\ge 0$ too, we obtain the result.}
\begin{align*}
 \ln {\cal Z}(S) \geq \ln {\cal Z}(S_1) + \ln {\cal Z}(S_2)\,.
\end{align*}
Then averaging over all coupling constants in $S$, using that they are independent with distributions depending only on the cardinality $\vert X\vert$ and 
that all cardinalities are contained in $S$, $S_1$ and $S_2$, we obtain 
\begin{align*}
 \mathbb{E}_S\ln {\cal Z}(S) \geq \mathbb{E}_{S_1}\ln {\cal Z}(S_1) + \mathbb{E}_{S_2}\ln {\cal Z}(S_2)
\end{align*}
which is equivalent to $np_n \geq  n_1 p_{n_1} + n_2 p_{n_2}$ (for $n_1$, $n_2$ greater than $x_{\rm max}$).
This means that the function $n\mapsto np_n$ is a superadditive sequence and therefore by Fekete's lemma the limit $\lim_{n\to +\infty} p_n$ equals $\sup_n p_n$. To show that 
$\sup_n p_n$ is finite note that 
\begin{align*}
 p_n \leq - \frac1n \,\mathbb{E}\,\min_{\bm{\sigma}}\mathcal{H}(\bm{\sigma}) + \ln 2 = \frac1n \sum_{X\subset\{1,\dots,n\}} \mathbb{E}\,J_X + \ln 2 \leq C+ \ln2
\end{align*}
using $J_X\ge 0$ and condition \eqref{conditionthermo}. This ends the proof.
\end{proof}

Consider now Example~\ref{ex1} for $n$ large and $p$ fixed. We have $J_X = 0$ for all subsets with cardinalities $\vert X\vert$ different from $1$ and $p$. For $\vert X\vert =1$ the coupling constants $J_X=H$ are deterministic so obviously $\mathrm{Var}(J_X) =0$. For $\vert X\vert =p$ the couplings $J_X$ are independent Bernoulli variables taking 
value $J$ with probability $\gamma n \binom{n}{p}^{-1}$ and $0$ with complementary probability, so $\mathrm{Var}(J_X) = J^2 \gamma n \binom{n}{p}^{-1} \big ( 1 - \gamma n \binom{n}{p}^{-1} \big )$. Thus
\begin{align*}
 \sum_{X\subset\{1,\dots,n\}}\mathrm{Var}(J_X) = \binom{n}{p} J^2 \gamma n \binom{n}{p}^{-1} \Big ( 1 - \gamma n \binom{n}{p}^{-1} \Big ) < J^2\gamma n\,.
\end{align*}
Therefore Proposition \ref{thm:free_ent_conc} applies. Similarly the condition for the existence of the thermodynamic limit of the pressure is also met because the left hand side of \eqref{conditionthermo} equals
\begin{align*}
 \frac{1}{n}\sum_{X\subset\{1,\dots,n\}}\mathbb{E}\,J_X = H + J\gamma\,.
\end{align*}
The mixed $p$-spin models can be treated similarly.

\section*{Acknowledgments}
This work was done while J.B. was affiliated with EPFL. J.B. and C.L.C. acknowledge funding from the Swiss National Science
Foundation grant 200021-156672. J.B.  would like to particularly 
thank Amin Coja-Oghlan for suggesting him to work on this problem and for the very many interesting discussions 
they had when he visited Dr. Coja-Oghlan's group in Francfurt, as well as Florent Krzakala for insightful comments, Adriano Barra for pointing relevant references and Silvio Franz for clarifications on some of his works. J.B. also deeply thanks Nadia Bersier for her support.
%
{
	\singlespacing
	\bibliographystyle{unsrt_abbvr}
	\bibliography{refs}
}
\end{document}